\newtheorem{theorem}{Theorem}
\newtheorem{corollary}{Corollary}         
\newtheorem{lemma}{Lemma}          
\newtheorem{proposition}{Proposition}
\newtheorem{remark}{Remark}
\newtheorem{conjecture}{Conjecture}
\newcommand{\pdbyd}[2][]{\frac{\partial #1}{\partial #2}}
\newcommand{\abs}[1]{\left\lvert #1 \right\rvert}
\newcommand{\naturals}{\ensuremath{\mathbb{N}}}
\newcommand{\reals}{\ensuremath{\mathbb{R}}}
\newcommand{\complexes}{\ensuremath{\mathbb{C}}}
\begin{document}


\title{Asymptotics of Certain Sums Required in Loop Regularisation
}

\author{Richard Chapling\thanks{Electronic address: 
\url{rc476@cam.ac.uk}}}

\affil{Department of Applied Mathematics and Theoretical Physics, \\ University of Cambridge, Cambridge, England}

\maketitle

\date{}

\begin{abstract}
We consider the three conjectures stated in a 2003 paper of Wu, concerning the asymptotics of particular sums of products of binomials, powers and logarithms. These sums relate to the form of the regularised integrals used in loop regularisation. We show all three are true, extend them to more general powers and produce their full asymptotic series. We also extend a classical result to produce an exact formula for the sum in the last.\\

\small
\emph{Keywords}: Loop regularization; asymptotic analysis.

PACS Nos.: 02.30.Mv., 11.15.Bt
\end{abstract}

\section{Introduction}

It is well-known\footnote{Weinberg \cite{Weinberg:1995kx}, Ch. 11; Folland \cite{Folland:2008vn}, Ch. 7; and Pauli's letters to Schwinger and Bethe\cite{Pauli:1993ys}} that regularisation is vital for valid physical interpretation of results produced by calculations in Quantum Field Theory; over the last half-century there have been many particular forms of regularisation suggested.\footnote{See Velo and Wightman \cite{velo:1976zr} for a broad survey thereof.}

One of the newest regularisation techniques is \emph{Loop Regularisation}, pioneered by Wu's paper of 2003\cite{ISI:000187232400003}. It possesses a number of technical advantages, including preservation of gauge symmetries, calculation being carried out in the original number of dimensions of the theory (rather than using an analytic continuation as in dimensional regularisation): this gives it significant advantage in chiral theories, where \( \gamma^{5} \) \emph{et al.} cause problems with other standard regularisation methods.\footnote{See \cite{Ma:2006fk} for an example where Dimensional Regularisation cannot correctly manage the \(\gamma^{5}\) terms.} Further, it has a certain simplicity of extra content added to the theory, in which two mass scales are introduced that act as soft cutoffs, and allow for natural theoretical incorporation of energy scales and the mass gap in various QFTs; regularising at the level of diagram integrals also avoids the introduction of further diagrams and bookkeeping particles, and this also allows it to maintain non-Abelian gauge invariance, unlike, say, Pauli--Villars regularisation.\footnote{\cite{Dai:2005ve,ISI:000258778900002,ISI:000293182400002}, and see in particular Wu's review article \cite{Wu:2014uq} for a summary of these.}

On the other hand, the theory also admits some interesting mathematical content: the consistency conditions derived in the second section of \cite{ISI:000187232400003} are of use by themselves in the calculation of scattering amplitudes for physical processes. In particular, they allow for the derivation of results independent of the specific regularisation scheme used.\footnote{See \cite{ISI:000301920000014} for an application to the \( H \to \gamma\gamma \) one-loop graphs, for example. Arguably this is the more theoretically interesting part of Wu's original paper, loop regularisation merely being a nice example of a scheme that satisfies these conditions.}

However, there was seen to be a small gap in the derivation of the simple forms of the \emph{irreducible loop integrals} (ILIs) key to the practical application of the theory: Wu's original paper assumes the veracity of three conjectures on the leading-order asymptotic behaviour of sums involved in the calculation of the regularised ILIs; this reduces the integrals to an easily computable form. These conjectures are essentially encompassed in equations (4.4), (4.7), and (4.9) of \cite{ISI:000187232400003}; we repeat them here for clarity, ease of access, and definition of our notation:
\begin{conjecture}
	Let $m$ be a positive integer. Then
	\begin{align}
	\label{eq:conj1}
		\sum_{k \geqslant 1} (-1)^{k}\binom{m}{k} k \log{k} \sim \frac{1}{\log{m}} \quad \text{as } m \to \infty.
	\end{align}
\end{conjecture}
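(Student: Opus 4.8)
The plan is to recognise the alternating binomial sum as an iterated finite difference, convert it to an integral by a Frullani representation of the logarithm, and then extract the asymptotics by a Laplace-type analysis in which the only real subtlety is a slowly varying logarithmic factor.

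First I would strip off the factor $k$ using $k\binom{m}{k} = m\binom{m-1}{k-1}$, which turns the sum into $-m\sum_{j=0}^{m-1}(-1)^j\binom{m-1}{j}\log(1+j)$, an $(m-1)$-th forward difference of the logarithm. Writing $\log(1+j) = \int_0^\infty \frac{e^{-t}-e^{-(1+j)t}}{t}\,dt$ and summing under the integral, the constant part is annihilated because $\sum_j(-1)^j\binom{m-1}{j}=0$ for $m\geqslant2$, while the remaining geometric sum collapses to $(1-e^{-t})^{m-1}$. This yields the clean representation
\[ \sum_{k\geqslant1}(-1)^k\binom{m}{k}k\log k = m\int_0^\infty \frac{e^{-t}}{t}\,(1-e^{-t})^{m-1}\,dt. \]
The substitution $u=1-e^{-t}$ then gives $-m\int_0^1 \frac{u^{m-1}}{\log(1-u)}\,du$, and $v=1-u$ brings it to $-m\int_0^1 \frac{(1-v)^{m-1}}{\log v}\,dv$.

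The asymptotics now follow from the fact that $(1-v)^{m-1}$ concentrates its mass near $v=0$. Rescaling $v=y/m$ replaces $(1-v)^{m-1}$ by $e^{-y}$ and $\log v$ by $\log y - \log m$, so that to leading order the denominator is $-\log m$ and the integral against $e^{-y}$ contributes $\int_0^\infty e^{-y}\,dy = 1$; hence the sum is asymptotic to $1/\log m$, as claimed. To obtain the full asymptotic series advertised in the abstract I would expand $\frac{1}{\log y-\log m}=-\frac{1}{\log m}\sum_{j\geqslant0}(\log y/\log m)^j$ and integrate term by term, using $\int_0^\infty e^{-y}(\log y)^j\,dy=\Gamma^{(j)}(1)$, producing a series in powers of $1/\log m$ with coefficients $\Gamma^{(j)}(1)$ (so the first correction is $-\gamma/(\log m)^2$).

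The main obstacle is making this last step rigorous: because $1/\log v$ is neither analytic nor rapidly varying at $v=0$, standard Watson's lemma does not apply directly, and the naive replacement $\log v\approx-\log m$ must be justified with explicit error control. I would handle this by splitting the $v$-integral at a threshold such as $v=m^{-1}\log^2 m$, bounding the tail where $(1-v)^{m-1}$ is exponentially small, and on the main part controlling the remainder of the geometric expansion of $1/(\log y-\log m)$; the convergence of the moments $\int_0^\infty e^{-y}(\log y)^j\,dy$ keeps every coefficient finite and shows the expansion is a genuine asymptotic, rather than convergent, series in $1/\log m$.
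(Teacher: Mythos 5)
Your proposal is correct, and it rests on the same engine as the paper: both arguments reduce the sum to the integral $m\int_0^\infty t^{-1}e^{-t}(1-e^{-t})^{m-1}\,dt$ (the paper's $I(-1,m)$) and then perform a Laplace-type analysis whose expansion parameter is $1/\log m$; your final series $\sum_{j\geqslant 0}\Gamma^{(j)}(1)(\log m)^{-j-1}$ is exactly the paper's, since $(-1)^r\binom{-1}{r}=1$. The routes to and from that integral differ, though, in two instructive ways. First, you obtain the integral representation via the absorption identity $k\binom{m}{k}=m\binom{m-1}{k-1}$ together with Frullani's integral for $\log(1+j)$, which is elementary and direct; the paper instead identifies $I(\alpha,m)$ with $\Gamma(\alpha+1)\sum_{k\geqslant 1}(-1)^{k-1}\binom{m}{k}k^{-\alpha}$ for general complex $\alpha$ and extracts the $\log k$ sums as finite parts at the removable singularities $\alpha=-n$ by L'H\^opital, which costs more machinery but delivers all three conjectures (and all powers $n$) from a single meromorphic identity. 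Second, your analysis keeps the factor $1/(-\log v)$ exact and approximates the kernel $(1-v)^{m-1}\approx e^{-mv}$, then expands $1/(\log m-\log y)$ geometrically with explicit truncation error; the paper makes the dual choice, substituting $e^{-u}=1-e^{-y}$ so that the exponential $e^{-mu}$ is exact and the error sits in replacing $-\log(1-e^{-u})$ by $-\log u$ (controlled by the appendix inequalities), after which it simply quotes Wong's theorem on $\int_0^c t^{\lambda-1}(-\log t)^{\mu}e^{-zt}\,dt$. Your error-control plan is sound: the cut at $v=m^{-1}\log^2 m$ makes the tail (including the neighbourhood of $v=1$, where $1/\log v$ blows up) exponentially small under $(1-v)^{m-1}$, and on the main range the truncated expansion has remainder $O\bigl(|\log y|^{N}(\log m)^{-N-1}\bigr)$ with finite moments $\int_0^\infty e^{-y}|\log y|^N\,dy$, so your self-contained estimate is a legitimate substitute for the cited theorem.
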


\begin{conjecture}
	Let $m$ be a positive integer. Then
	\begin{align}
	\label{eq:conj2}
		\sum_{k \geqslant 1} (-1)^{k}\binom{m}{k} \log{k} \sim \log{\log{m}} + \gamma \quad \text{as } m \to \infty,
	\end{align}
	where $\gamma$ is the Euler--Mascheroni constant.
\end{conjecture}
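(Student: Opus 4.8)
The plan is to separate the $m$-dependence into a contour integral and then extract the asymptotics from a single \emph{confluent} singularity, where the branch point of $\log z$ collides with a pole of the Gamma function; both the iterated logarithm and the constant $\gamma$ will issue from this one point.

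As a warm-up and sanity check I would first record the elementary integral representation coming from $\log k=\int_0^\infty t^{-1}\bigl(e^{-t}-e^{-kt}\bigr)\,dt$: summing under the integral and using $\sum_{k=1}^m(-1)^k\binom mk=-1$ together with $\sum_{k=1}^m(-1)^k\binom mk e^{-kt}=(1-e^{-t})^m-1$ collapses the sum to
\[
\sum_{k\geq1}(-1)^k\binom mk\log k=\int_0^\infty\frac{1-e^{-t}-(1-e^{-t})^m}{t}\,dt,
\]
the $\log 0$ difficulty never arising because $k=0$ is excluded and these two identities render the integrand integrable at both ends. For the asymptotics, though, I would work from the N\"orlund--Rice representation
\[
\sum_{k\geq1}(-1)^k\binom mk\log k=-\frac{1}{2\pi i}\oint_{\gamma}\log z\;\frac{\Gamma(m+1)\,\Gamma(-z)}{\Gamma(m+1-z)}\,dz,
\]
where $\gamma$ encircles $1,\dots,m$ but excludes the branch point at $0$; the residue at each $z=k$ returns $(-1)^{k+1}\binom mk\log k$.

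Next I would deform $\gamma$ to the left until it wraps the branch cut of $\log z$ along $(-\infty,0]$, the connecting arcs at large $|z|$ being harmless because $\Gamma(-z)/\Gamma(m+1-z)=O(|z|^{-m-1})$. On the resulting Hankel contour $\log z$ jumps by $2\pi i$ across the cut while the Gamma-quotient stays single valued, so $S_m$ becomes a real integral along the cut together with a small circle about $0$ (which supplies the regulator). Since the factor $\Gamma(m+1)\Gamma(-z)/\Gamma(m+1-z)$ behaves like $\Gamma(x)\,m^{-x}$ when $z=-x$, the whole integral is dominated by a neighbourhood of the branch point. There I would insert the Stirling ratio $\Gamma(m+1)/\Gamma(m+1-z)\sim m^{z}$, uniformly valid for bounded $|z|$, the portion of the cut with $|z|\geq X$ (any fixed $X>0$) contributing only $O(m^{-X})$; this reduces the leading behaviour to the regularised integral $\int_0^{X}\Gamma(x)\,m^{-x}\,dx$, in which a simple pole of $\Gamma$ sits exactly at the logarithmic branch point. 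Splitting $\Gamma(x)=x^{-1}-\gamma+O(x)$, the $x^{-1}$ term integrates against $m^{-x}$ to an exponential integral $E_1(\epsilon\log m)$, the upper-limit piece $E_1(X\log m)$ being exponentially small in $m$.

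This exponential integral is where both pieces of the answer are born: its small-argument expansion $E_1(y)=-\gamma-\log y+O(y)$ gives, through $\log(\epsilon\log m)=\log\epsilon+\log\log m$, a cancellation of the regulator $\log\epsilon$ that leaves $\log\log m$, while the $-\gamma$ becomes the additive constant, the prefactors restoring the sign to yield $+\log\log m+\gamma$. Carrying the expansion of $\Gamma(x)$ and the moment integrals $\int_0^\infty x^{j}e^{-x\log m}\,dx\sim j!/(\log m)^{j+1}$ to all orders then produces the full asymptotic series in powers of $1/\log m$, the $O(1/m)$ Stirling error lying beyond every such term. I expect the real work to be the bookkeeping at $z=0$: justifying the interchange of the $\epsilon\to0$ regularisation with the contour manipulation, bounding the cut integral away from the origin uniformly in $m$, and confirming that $\Gamma(m+1)/\Gamma(m+1-z)\sim m^{z}$ holds uniformly (not merely pointwise) on the localised contour. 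The remaining residue computations and applications of Watson's lemma are routine.
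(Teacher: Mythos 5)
Your proposal is correct, and it takes a genuinely different route from the paper's. The paper is real-variable throughout: it identifies the sum with the $\alpha$-derivative at $0$ of $I(\alpha,m)=m\int_0^\infty y^\alpha e^{-y}(1-e^{-y})^{m-1}\,dy$, substitutes $e^{-u}=1-e^{-y}$ to get the Laplace integral $m\int_0^\infty e^{-mu}\log\left(-\log(1-e^{-u})\right)du$, replaces the kernel by $\log(-\log u)$ near the origin using the appendix inequalities, and then quotes a ready-made theorem (Wong) on integrals $\int_0^c t^{\lambda-1}\log(-\log t)\,e^{-zt}\,dt$ to obtain $\log\log m+\gamma-\sum_{r\geqslant1}\Gamma^{(r)}(1)(\log m)^{-r}/r$. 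You instead run N\"orlund--Rice singularity analysis, and your mechanics check out; indeed several of the steps you flag as "the real work" dissolve on inspection. The kernel $\Gamma(m+1)\Gamma(-z)/\Gamma(m+1-z)$ is just the rational function $(-1)^{m+1}m!/\bigl(z(z-1)\dotsm(z-m)\bigr)$, so the $O(|z|^{-m-1})$ decay and the vanishing of the connecting arcs are elementary; on the cut, at $z=-x$, it equals the Beta integral $B(x,m+1)=\int_0^1 t^{x-1}(1-t)^m\,dt$, from which your tail claim $\int_X^\infty B(x,m+1)\,dx=O(m^{-X})$ follows by splitting that $t$-integral at $t=1/2$; and the regulator bookkeeping works exactly as you say: the small circle contributes $-\log\epsilon$, the pole part of the cut contributes $\gamma+\log\epsilon+\log\log m+O(\epsilon\log m)+O(1/\log m)$, and their sum is $\epsilon$-independent by Cauchy's theorem, so $\epsilon\to0$ gives $\log\log m+\gamma$. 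Carrying $\Gamma(x)-1/x=\sum_{j\geqslant1}\Gamma^{(j)}(1)x^{j-1}/j!$ against the moments $\int_0^\infty x^{j-1}m^{-x}dx=(j-1)!/(\log m)^{j}$ reproduces precisely the paper's full series. As to what each approach buys: the paper's is elementary (no contour integration), and one integral $I(\alpha,m)$ serves all three conjectures; yours yields along the way an exact formula, $\sum_{k\geqslant1}(-1)^k\binom mk\log k=-\int_0^1\left(B(x,m+1)-x^{-1}\right)dx-\int_1^\infty B(x,m+1)\,dx$ (which checks at $m=1,2$), gives error terms that are power-law small in $m$, hence beyond every order of the $1/\log m$ scale, and is equally systematic across the kernels $\log z$, $z^{-n}$ and $z^n\log z$ of the three conjectures.
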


\begin{conjecture}
	Let $m,n$ be positive integers. Then
	\begin{align}
	\label{eq:conj3}
		\sum_{k \geqslant 1} (-1)^{k-1}\binom{m}{k} \frac{1}{k^{n}} \sim \frac{\log^{n}{m}}{n!} \quad \text{as } m \to \infty.
	\end{align}
\end{conjecture}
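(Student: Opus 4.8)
The plan is to replace the asymptotic claim by an \emph{exact} evaluation of the finite sum
\[
	S_n(m) := \sum_{k=1}^{m} (-1)^{k-1}\binom{m}{k}\frac{1}{k^{n}},
\]
from which the leading behaviour will be immediate. Since the summand depends on $n$ only through the factor $k^{-n}$, the natural device is a generating function in $n$: multiplying by $t^{n}$, summing over $n\geqslant 1$, and interchanging the finite inner sum with the convergent outer one collapses the $n$-sum into a geometric series and yields
\[
	\sum_{n\geqslant 1} S_n(m)\, t^{n}
	= \sum_{k=1}^{m} (-1)^{k-1}\binom{m}{k}\,\frac{t}{k-t},
\]
valid for $\abs{t}<1$.

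The core step is to evaluate the right-hand side in closed form. For this I would invoke the classical partial-fraction (beta-function) identity
\[
	\sum_{k=0}^{m} \frac{(-1)^{k}\binom{m}{k}}{z+k}
	= \frac{m!\,\Gamma(z)}{\Gamma(z+m+1)},
\]
which is precisely the classical result to be extended, and specialise it to $z=-t$. Separating the $k=0$ term and rearranging converts the display above into
\[
	\sum_{n\geqslant 0} S_n(m)\, t^{n}
	= \frac{m!\,\Gamma(1-t)}{\Gamma(m+1-t)}
	= \Gamma(1-t)\,\frac{\Gamma(m+1)}{\Gamma(m+1-t)},
\]
where the $n=0$ term $S_0(m)=1$ has been added back. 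Equivalently, one may recognise $S_n(m)=h_n\!\left(1,\tfrac12,\dots,\tfrac1m\right)$ as a complete homogeneous symmetric function and quote the standard product generating function $\prod_{i=1}^{m}(1-t/i)^{-1}$; the two routes agree.

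It then remains to extract the coefficient of $t^{n}$ and pass to the limit. The standard ratio asymptotic $\Gamma(m+1)/\Gamma(m+1-t)\sim m^{t}$ shows that, as $m\to\infty$, the generating function tends to $\Gamma(1-t)\,m^{t} = \Gamma(1-t)\,e^{t\log m}$, so by Cauchy's coefficient formula on a small circle $\abs{t}=\rho<1$ one finds $S_n(m)\to[t^{n}]\,\Gamma(1-t)e^{t\log m}$. Since $[t^{n}]e^{t\log m}=(\log m)^{n}/n!$ while $\Gamma(1-t)=1+\gamma t+\cdots$ contributes only strictly lower powers of $\log m$, the leading term is exactly $(\log m)^{n}/n!$, as conjectured. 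The main obstacle is analytic rather than algebraic: one must make the Gamma-ratio asymptotic \emph{uniform} on the contour $\abs{t}=\rho$ in order to justify passing the limit through the coefficient integral. Controlling this uniformly, and retaining the full $1/m$-expansion of $\Gamma(m+1)/\Gamma(m+1-t)$ (whose coefficients are polynomials in $t$), promotes the argument from the leading order to the complete asymptotic series promised in the introduction.
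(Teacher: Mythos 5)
Your proof is correct, and its algebraic core coincides with the paper's second treatment of this conjecture (Section 4): the generating function $\sum_{n\geqslant 0} S_n(m)\,t^n = \Gamma(1-t)\,\Gamma(m+1)/\Gamma(m+1-t)$ is exactly the paper's $J_m(t)$, which it derives from the Beta integral $m\int_0^1(1-x)^{m-1}x^{-t}\,dx$ rather than from the partial-fraction identity; the two derivations are equivalent. Where you genuinely differ is in coefficient extraction. The paper proceeds combinatorially: taking logarithms, recognising $\log J_m$ as $\sum_{n\geqslant 1} H_m^{(n)}t^n/n$ via the digamma function, and applying the exponential formula, it obtains an exact closed formula (complete Bell polynomials in $(n-1)!\,H_m^{(n)}$, Theorem~\ref{thm:sum3exactform}), from which the asymptotics follow using $H_m = \log m + \gamma + O(1/m)$ and $H_m^{(n)} = \zeta(n) + O(m^{1-n})$. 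You instead stay analytic and use Cauchy's formula on $\lvert t\rvert = \rho < 1$ together with the uniform estimate $\Gamma(m+1)/\Gamma(m+1-t) = m^t\bigl(1+O(1/m)\bigr)$; that uniformity --- the obstacle you rightly flag --- is standard (Stirling, or the uniform asymptotics of Gamma-function ratios for parameters in compact sets), and it yields $S_n(m) = [t^n]\,\Gamma(1-t)e^{t\log m} + O(m^{\rho-1})$, i.e.\ the entire polynomial in $\log m$ (whose top two terms are $\log^n m/n!$ and $\gamma\log^{n-1}m/(n-1)!$, matching the paper's Corollary) with a power-law error, in one step. The trade-offs: your route is shorter and avoids the Bell-polynomial bookkeeping, and it delivers the full expansion for integer $n$ at once; but it produces no exact formula and is inherently confined to integer $n$, whereas the paper's primary proof of this conjecture (Section 3, via the Laplace-type integral $I(\alpha,m)$ and Theorem~\ref{thm:logasymp}) covers arbitrary complex exponents $\alpha$.
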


To a certain extent, this omission is remedied by physical considerations and explicit construction in a subsequent paper of Wu\cite{ISI:000224011700003}, but since \cite{ISI:000187232400003} states ``Obviously, an analytical proof for the above conjectures must be very helpful and important. It may also provide deeper insights into mathematics'', the author felt that it would be worthwhile to provide such a rigorous proof: this is the purpose of this paper.

\section{A Useful Integral}

First we have a classical result due to Euler:\footnote{\cite{Euler:E368}, \S26. and \cite{Euler:E212}, Cap. I, \S16, see also \cite{Gould:1978db} for a more detailed modern discussion.}

\begin{theorem}
\label{thm:eulerdiff}
	Let $r$ be an integer with $0 \leqslant r<m$, and P(k) be a polynomial in $k$ of degree $r$. Then
	\[ \sum_{k \geqslant 0} (-1)^{k} \binom{m}{k} P(k) = 0. \]
\end{theorem}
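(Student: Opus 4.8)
The plan is to recognise this as the classical statement that the $m$-th forward difference of a polynomial of degree less than $m$ vanishes. I would introduce the shift operator $E$ and the forward difference operator $\Delta = E - I$, acting on functions of $k$ by $(Ef)(k) = f(k+1)$ and $(\Delta f)(k) = f(k+1) - f(k)$. Since $\binom{m}{k} = 0$ for $k > m$ the sum is finite, and the binomial theorem for the commuting operators $I$ and $E$ gives
\[ (I - E)^{m} = \sum_{k=0}^{m} (-1)^{k} \binom{m}{k} E^{k}. \]
Applying both sides to $P$ and evaluating at $0$, together with $(I-E)^{m} = (-1)^{m}\Delta^{m}$, yields
\[ \sum_{k \geqslant 0} (-1)^{k} \binom{m}{k} P(k) = \bigl( (I-E)^{m} P \bigr)(0) = (-1)^{m} (\Delta^{m} P)(0), \]
so it suffices to prove that $\Delta^{m} P \equiv 0$ whenever $\deg P = r < m$.

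By linearity of $\Delta$ it is enough to treat a convenient basis of the polynomials of degree at most $r$. I would use the falling factorials $(k)_{j} := k(k-1)\cdots(k-j+1)$ (with $(k)_{0} := 1$) for $0 \leqslant j \leqslant r$, because a direct check gives $\Delta (k)_{j} = j\,(k)_{j-1}$, so each application of $\Delta$ lowers the degree by exactly one and $\Delta^{m}$ annihilates any polynomial of degree $r < m$. Alternatively, and more transparently, I would bypass the operator formalism altogether and compute in this basis directly: using $\binom{m}{k}(k)_{j} = (m)_{j}\binom{m-j}{k-j}$ and re-indexing $k \mapsto k-j$ (legitimate since $(k)_{j} = 0$ for $k < j$),
\[ \sum_{k \geqslant 0} (-1)^{k} \binom{m}{k} (k)_{j} = (-1)^{j} (m)_{j} \sum_{l \geqslant 0} (-1)^{l} \binom{m-j}{l} = (-1)^{j} (m)_{j} (1-1)^{m-j} = 0, \]
which is valid precisely because $m - j \geqslant m - r > 0$.

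There is no deep obstacle here, the result being classical; the only points that require care are the bookkeeping of signs in the operator identity and the choice of basis. Working with ordinary powers $k^{j}$ would force one to re-expand and carry Stirling-number coefficients, whereas the falling factorials diagonalise the entire argument and make the role of the hypothesis $r < m$ immediate through the factor $(1-1)^{m-j}$, which collapses to zero exactly when the exponent $m-j$ is positive.
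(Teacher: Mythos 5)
Your proof is correct, but it takes a genuinely different route from the paper's. The paper proves this statement by considering the action of the differential operator $P(xD)$, $D = d/dx$, on the generating function $(1-x)^{m}$: expanding $(1-x)^{m}$ by the binomial theorem, termwise action produces exactly $\sum_{k}(-1)^{k}\binom{m}{k}P(k)x^{k}$, while on the other side each application of $xD$ removes at most one factor of $(1-x)$, so $P(xD)(1-x)^{m}$ retains a factor $(1-x)^{m-r}$ and therefore vanishes at $x=1$ precisely because $r<m$. You instead stay entirely within the finite-difference calculus: you identify the sum as $(-1)^{m}(\Delta^{m}P)(0)$ and show that $\Delta^{m}$ annihilates polynomials of degree $r<m$ via the falling-factorial basis, or, in your second variant, you bypass operators altogether and collapse the sum to $(-1)^{j}(m)_{j}(1-1)^{m-j}$ through the identity $\binom{m}{k}(k)_{j}=(m)_{j}\binom{m-j}{k-j}$. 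The two arguments are parallel in spirit --- in both, the hypothesis $r<m$ enters because a degree-lowering operation is applied more times than the degree permits --- but yours is purely algebraic, requiring no generating function, no differentiation, and no evaluation at a boundary point, and your direct computation is arguably the most self-contained version, resting on a single binomial identity plus $(1-1)^{m-j}=0$. What the paper's one-line differential-operator argument buys is coherence with the rest of its machinery, where the same function $(1-e^{-y})^{m}$ and its expansion drive the integral representation $I(\alpha,m)$; indeed the paper later remarks that the vanishing of $I(\alpha,m)/\Gamma(\alpha+1)$ at negative integer $\alpha$ gives yet another proof of this theorem.
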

This can be proved by considering the action of the differential operator $P(xD)$ on $(1-x)^{m}$, where $D=d/dx$.

\begin{remark}
	This result gives us an indication that the results expressed in the first two conjectures are plausible, since $k^{n} \log{k}$ is sandwiched between $k^{n}$ and $k^{n+1}$. Clearly the cancellation is very delicate, making numerical calculation of the series itself unreliable.
\end{remark}

Now define
\[ I(\alpha,m) = m \int_{0}^{\infty} y^{\alpha} e^{-y} (1-e^{-y})^{m-1}  \, dy. \]
The following lemma connects this integral to the sums we are considering:

\begin{lemma}
	Suppose $\alpha$ is complex and not a nonpositive integer, with $\Re{(m+\alpha)}>0$. Then $I(\alpha,m)$ exists and
	\begin{align}
	\label{eq:Inmnonintsum}
		I(\alpha,m) = \Gamma(\alpha+1) \sum_{k \geqslant 1} (-1)^{k-1} \binom{m}{k} \frac{1}{k^{\alpha}}.
	\end{align}
	If $\alpha=:-n \in \{-1,-2,\dotsc,m\}$, we instead have
	\begin{align}
	\label{eq:Inmintsum}
		I(-n,m) = \frac{(-1)^{n-1}}{(n-1)!} \sum_{k \geqslant 1} (-1)^{k} \binom{m}{k} k^{n} \log{k}.
	\end{align}
	Also,
	\begin{align}
	\label{eq:I'0sum}
		\pdbyd[I]{\alpha}(0,m) = \int_{0}^{\infty} e^{-y}(1-e^{-y})^{m-1}\log{y} \, dy = -\gamma - \sum_{k \geqslant 1}  (-1)^{k-1} \binom{m}{k} \log{k},
	\end{align}
	where $\gamma=-\Gamma'(1)$ is the Euler--Mascheroni constant.
\end{lemma}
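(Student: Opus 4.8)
The plan is to reduce all three formulae to a single identity obtained by a binomial expansion of the integrand, and then to treat the exceptional values of $\alpha$ by analytic continuation together with a pole--zero cancellation. The starting point is that, by the binomial theorem and the identity $m\binom{m-1}{k-1}=k\binom{m}{k}$,
\[ m\,e^{-y}(1-e^{-y})^{m-1}=\sum_{k=1}^{m}(-1)^{k-1}k\binom{m}{k}e^{-ky}. \]
For $\Re(\alpha)>-1$ every term $y^{\alpha}e^{-ky}$ is integrable on $(0,\infty)$, so I would integrate this finite sum term by term using $\int_{0}^{\infty}y^{\alpha}e^{-ky}\,dy=\Gamma(\alpha+1)k^{-\alpha-1}$. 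The extra factor $k$ cancels one power of $k$ and leaves \eqref{eq:Inmnonintsum}, valid so far on the half-plane $\Re(\alpha)>-1$.

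The first genuine difficulty is extending this to the full range $\Re(m+\alpha)>0$, where term-by-term integration breaks down: for $\Re(\alpha)\leq-1$ the individual integrals diverge at $y=0$, even though $I(\alpha,m)$ still converges because the factor $(1-e^{-y})^{m-1}\sim y^{m-1}$ absorbs the singularity. I would therefore argue by analytic continuation. Splitting $\int_0^\infty=\int_0^1+\int_1^\infty$ and bounding the integrand by $Cy^{\Re(\alpha)+m-1}$ near $0$ and by a multiple of $e^{-y/2}$ near infinity, Morera's theorem shows $I(\,\cdot\,,m)$ is holomorphic on $\{\Re(\alpha+m)>0\}$. The right-hand side of \eqref{eq:Inmnonintsum}, call it $\Gamma(\alpha+1)S(\alpha)$ with $S(\alpha):=\sum_{k\geq1}(-1)^{k-1}\binom{m}{k}k^{-\alpha}$, is meromorphic on $\complexes$ with poles only at the negative integers, since $S$ is a finite sum of entire functions. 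As the two sides agree on the open set $\Re(\alpha)>-1$ and the common domain minus the negative integers is connected, the identity theorem forces agreement throughout.

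Formula \eqref{eq:Inmintsum} then emerges as a limit at $\alpha=-n$, a point interior to the domain when $1\leq n<m$, so that $I(-n,m)=\lim_{\alpha\to-n}\Gamma(\alpha+1)S(\alpha)$. Here $\Gamma(\alpha+1)$ has a simple pole with residue $(-1)^{n-1}/(n-1)!$, while $S$ has a compensating simple zero: $S(-n)=\sum_{k\geq1}(-1)^{k-1}\binom{m}{k}k^{n}=0$ by Euler's Theorem~\ref{thm:eulerdiff}, since $k^{n}$ is a polynomial of degree $n<m$. The limit of the product is the residue times $S'(-n)$, and differentiating $S$ gives $S'(-n)=\sum_{k\geq1}(-1)^{k}\binom{m}{k}k^{n}\log k$, which is exactly \eqref{eq:Inmintsum}.

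Finally, \eqref{eq:I'0sum} follows by differentiating the identity $I(\alpha,m)=\Gamma(\alpha+1)S(\alpha)$ at $\alpha=0$, where both factors are holomorphic. The middle equality is differentiation under the integral sign, justified by the same local bounds, and the right-hand equality uses $\Gamma(1)=1$, $\Gamma'(1)=-\gamma$, $S(0)=\sum_{k\geq1}(-1)^{k-1}\binom{m}{k}=1$ and $S'(0)=-\sum_{k\geq1}(-1)^{k-1}\binom{m}{k}\log k$. I expect the main obstacle to be the second paragraph: rigorously establishing holomorphy of $I$ right up to the convergence boundary and invoking analytic continuation, the surrounding binomial expansion, residue, and derivative computations being essentially mechanical.
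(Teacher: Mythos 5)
Your proposal is correct and follows essentially the same route as the paper: binomial expansion and term-by-term integration for $\Re(\alpha)>-1$, meromorphic/analytic continuation to $\Re(m+\alpha)>0$, pole--zero cancellation at $\alpha=-n$ using Theorem~\ref{thm:eulerdiff} (the paper phrases this via L'H\^opital, you via residue times $S'(-n)$ --- the same computation), and the product rule at $\alpha=0$. Your second paragraph merely makes explicit (Morera, identity theorem, connectedness) what the paper asserts in one line about both sides being meromorphic.
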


\begin{proof}
	We shall prove these statements in the order given. For the integral to exist, notice that for $y>1$, the integrand is smaller than $y^{\Re{(\alpha)}}e^{-y}$. For $y<1$, it is instead bounded by $y^{\Re{(\alpha)}+m-1}$, so it follows that $I$ converges if $\Re{(m+\alpha)}>0$.
	
	Now suppose $\alpha>-1$, $m>0$. Then the integrand is bounded by $y^{\Re{(\alpha)}}e^{-y}$, so
	\begin{align*}
		I(\alpha,m) &= m\int_{0}^{\infty} y^{\alpha} \sum_{k \geqslant 1} (-1)^{k-1} \binom{m-1}{k-1} e^{-ky} \, dy \\
		&= \sum_{k \geqslant 1} (-1)^{k-1} \binom{m}{k} k \int_{0}^{\infty} y^{\alpha} e^{-ky} \, dy \\
		&= \sum_{k \geqslant 1} (-1)^{k-1} \binom{m}{k} \frac{\Gamma(\alpha+1)}{k^{\alpha}},
	\end{align*}
	where as $m>0$, the uniform absolute convergence of the sum allows us to swap the summation and integration. Both sides are meromorphic functions of $\alpha$, so they are in fact equal everywhere the integral exists, i.e. for $\Re{(m+\alpha)}>0$.
	
	Since the left-hand side of the previous equation is actually an analytic function of $\alpha$, the right-hand side possesses removable singularities at $\alpha=-1,-2,\dotsc,-(m-1)$.\footnote{Indeed, the vanishing of $I(\alpha,m)/\Gamma(\alpha+1)$ for $\alpha \in \{ -1, \dotsc, -(m-1) \}$ gives us another proof of Theorem \ref{thm:eulerdiff}.} To find the values at these points, we use L'H\^opital's rule: if $-n$ is a negative integer, for small $z$ we have
	\[ \Gamma(1-n+z) \sim \frac{(-1)^{n-1}}{(n-1)!z}. \]
	Also, 
	\[ \sum_{k \geqslant 1} (-1)^{k-1} \binom{m}{k} k^{n-z} = \sum_{k \geqslant 1} (-1)^{k-1} \binom{m}{k} k^{n} - z \sum_{k \geqslant 1} (-1)^{k-1} \binom{m}{k} k^{n} \log{k} + O(z^{2}) \]
	where the first term is equal to $0$ by Theorem \ref{thm:eulerdiff}, and hence the limit of the product as $z \to 0$ is
	\[ I(-n,m) = \frac{(-1)^{n-1}}{(n-1)!} \sum_{k \geqslant 1} (-1)^{k} \binom{m}{k} k^{n} \log{k}, \]
	which is the second formula.
	
	Finally, the first equality in the last formula is obvious. For the second, we need to find the derivative of the sum:
	\[ \pdbyd{\alpha}  \sum_{k \geqslant 1} (-1)^{k-1} \binom{m}{k} \frac{\Gamma(\alpha+1)}{k^{\alpha}} = \sum_{k \geqslant 1} (-1)^{k-1} \binom{m}{k} \frac{1}{k^{\alpha}}(\Gamma'(\alpha+1)-\Gamma(\alpha+1)\log{k}). \]
	Setting $\alpha=0$, Theorem \ref{thm:eulerdiff} again gives us
	\[ \pdbyd[I]{\alpha}(0,m) = - \gamma + \sum_{k \geqslant 1} (-1)^{k} \binom{m}{k} \log{k}. \]
\end{proof}

\section{Proof of the Conjectures and Calculation of the Full Asymptotic Series}

\subsection{Conjecture 1 and beyond}

Beginning with $I(-1,m)$, setting $e^{-u}=1-e^{-y}$, so $e^{-y} \, dy = - e^{-u} \, du$ and the limits swap, gives
\[ I(-1,m) = m\int_{0}^{\infty} \frac{e^{-mu}}{-\log{(1-e^{-u})}} \, du. \]
The integrand is positive, and it is easy to check that its derivative is decreasing on $(0,\infty)$. Consider splitting the integral at $c$, $0<c<1$. For $z>c$,
\[ \frac{e^{-mu}}{-\log{(1-e^{-u})}} < e^{-(m-1)u}, \]
since $-\log{(1-x)}>x$ for $0<x<1$. For $0<u<c$, see \ref{sec:ineq}. Then
\[ \int_{0}^{c} mu^{1-\epsilon}e^{-mu} du < \int_{0}^{\infty} mu^{1-\epsilon}e^{-mu} du = O(m^{-1+\epsilon}) \]

Therefore,
\begin{align*}
	\abs{I(-1,m) - m\int_{0}^{c}  \frac{e^{-mu}}{-\log{u}} \, du } &< A(c) \int_{0}^{c} mu^{1-\epsilon}e^{-mu} \, du + \int_{c}^{\infty} m e^{-(m-1)u} \, du \\
	&= O\left( m^{-(1-\epsilon)} \right);
\end{align*}
it shall follow that the asymptotic expansion of $I(-1,m)$ is in powers of $\log{m}$. To show this, we use the following theorem:\footnote{\cite{wong2001asymptotic}, p. 70, Theorem 2.}

\begin{theorem}
\label{thm:logasymp}
	For $\lambda,\mu,c \in \complexes$, $\Re{(\lambda)}>0$, $c = \abs{c}e^{i\gamma}$, $0<\abs{c}<1$, the integral
	\begin{align}
		\label{eq:genlogint}
		L(\lambda,\mu,z) = \int_{0}^{c} t^{\lambda-1} (-\log{t})^{\mu} e^{-z t} dt
	\end{align}
	has asymptotic expansion
	\[ L(\lambda,\mu,z) \sim z^{-\lambda} (\log{z})^{\mu} \sum_{r=0}^{\infty} (-1)^{r} \binom{\mu}{r} \Gamma^{(r)}(\lambda) (\log{z})^{-r} \]
	uniformly in $\arg{z}$, as $\abs{z} \to \infty$ in $\abs{\arg{(ze^{i\gamma})}} \leqslant \pi/2-\Delta<\pi/2$.
\end{theorem}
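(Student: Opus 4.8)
The plan is to convert $L(\lambda,\mu,z)$ into a Watson's-lemma-type integral by rescaling, and then to extract the dependence on $\log z$ by a binomial expansion of the logarithmic factor. First I would substitute $s=zt$. The ray $\arg(s)=\arg(cz)=\arg(ze^{i\gamma})$ lies in the wedge $\abs{\arg(ze^{i\gamma})} \leqslant \pi/2-\Delta$, so by Cauchy's theorem (the integrand being analytic there, and the short connecting arc contributing only an exponentially small amount because $\Re s \geqslant \abs{s}\sin\Delta$ on it) the contour may be rotated onto the positive real axis and then extended to $\infty$ at the cost of an error $O(e^{-\delta\abs{z}})$ for some $\delta=\delta(c,\Delta)>0$. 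This yields
\[ L(\lambda,\mu,z) = z^{-\lambda}\int_{0}^{\infty} s^{\lambda-1}(\log z - \log s)^{\mu} e^{-s}\,ds + O\!\left(e^{-\delta\abs{z}}\right), \]
and all of the $\log z$ behaviour is now concentrated in the single factor $(\log z - \log s)^{\mu}$.

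Next I would expand that factor. Writing $(\log z - \log s)^{\mu} = (\log z)^{\mu}\,(1-\log s/\log z)^{\mu}$ and applying the binomial series gives, formally,
\[ (\log z - \log s)^{\mu} = (\log z)^{\mu} \sum_{r=0}^{\infty} (-1)^{r}\binom{\mu}{r}(\log s)^{r}(\log z)^{-r}. \]
Integrating term by term against $z^{-\lambda}s^{\lambda-1}e^{-s}$, the crucial observation is that the moment integrals are exactly the derivatives of the Gamma function: differentiating $\Gamma(\lambda)=\int_{0}^{\infty}s^{\lambda-1}e^{-s}\,ds$ repeatedly in $\lambda$ gives
\[ \int_{0}^{\infty} s^{\lambda-1}(\log s)^{r} e^{-s}\,ds = \Gamma^{(r)}(\lambda), \]
so the $r$-th term contributes precisely $(-1)^{r}\binom{\mu}{r}\Gamma^{(r)}(\lambda)(\log z)^{-r}$, which is the claimed series.

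To make this rigorous I would truncate the binomial series at order $N$ with its Taylor remainder $R_{N}(s,z)$, and bound the remainder by splitting $(0,\infty)$ into three pieces according to the size of $\abs{\log s}$ relative to $\abs{\log z}$. On the bulk region $\abs{\log s} \leqslant \theta\abs{\log z}$ (with $\theta<1$ fixed) the standard estimate $\abs{R_{N}} \leqslant C_{N}(\abs{\log s}/\abs{\log z})^{N}$ applies, and combined with $\int_{0}^{\infty}s^{\Re(\lambda)-1}\abs{\log s}^{N}e^{-s}\,ds<\infty$ this contributes $O\!\left((\log z)^{\Re(\mu)-N}\right)$. On the small-$s$ region $s < \abs{z}^{-\theta}$ the hypothesis $\Re(\lambda)>0$ makes $s^{\lambda-1}$ integrable, so $\int_{0}^{\abs{z}^{-\theta}} s^{\Re(\lambda)-1}\,ds = O\!\left(\abs{z}^{-\theta\Re(\lambda)}\right)$, which beats every power of $\log z$; and on the large-$s$ region $s > \abs{z}^{\theta}$ the factor $e^{-\Re s}$ is super-exponentially small. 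Summing gives $R_{N} = O\!\left(z^{-\lambda}(\log z)^{\Re(\mu)-N}\right)$, completing the expansion to $N$ terms.

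The main obstacle is the \emph{uniformity} claim: the expansion must hold uniformly in $\arg z$ throughout the sector $\abs{\arg(ze^{i\gamma})} \leqslant \pi/2-\Delta$. The binomial expansion itself is essentially algebra, so the delicate part is ensuring that every bound above survives the limit $\abs{z}\to\infty$ independently of $\arg z$. This is exactly what the wedge condition provides: it guarantees the pointwise decay $\abs{e^{-s}} \leqslant e^{-\abs{s}\sin\Delta}$ along the rotated contour, which controls simultaneously the contour rotation, the extension to $\infty$, and the large-$s$ remainder. Keeping this factor under uniform control through all three regions, rather than the expansion, is where the real work lies.
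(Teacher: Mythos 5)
Your proposal is correct and follows essentially the same route as the paper, which cites Wong (\emph{Asymptotic Approximations of Integrals}, p.~70, Theorem 2) and summarises the proof in one line: substitute $x=zt$ and apply the binomial expansion, with the moments $\int_{0}^{\infty}s^{\lambda-1}(\log s)^{r}e^{-s}\,ds=\Gamma^{(r)}(\lambda)$ producing the coefficients. Your truncation-plus-three-region error analysis and the contour-rotation step supply exactly the details the paper leaves to the reference.
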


The proof is straightforward and uses the binomial expansion after substituting $x=zt$. From this we conclude that $I(-1,m)$ has asymptotic expansion
\[ I(-1,m) \sim \frac{1}{\log{m}} \sum_{r=0}^{\infty} (-1)^{r} \binom{-1}{r} \Gamma^{(r)}(1) (\log{m})^{-r}, \]
a considerable improvement of Conjecture 1.

However, we also know that for any given positive integer $n$, for $m$ sufficiently large we have
\[ \sum_{k \geqslant 1} (-1)^{k-1} \binom{m}{k} k^{n} \log{k} = (-1)^{n}(n-1)! I(-n,m). \]
Carrying out the same substitution as above gives
\[ I(-n,m) = m\int_{0}^{\infty} \frac{e^{-mu}}{(-\log{(1-e^{-u})})^{n}} \, du.  \]
Again splitting the integral at $c$, we have for $u>c$,
\[ \frac{e^{-mu}}{(-\log{(1-e^{-u})})^{n}} < e^{-(m-n)u}, \]
so
\[ m\int_{c}^{\infty} \frac{e^{-mu}}{(-\log{(1-e^{-u})})^{n}} \, du < \frac{m}{m-n}e^{(m-n)c} = O(e^{-mc}) \]
as $m \to \infty$. \ref{sec:ineq} gives that the difference on $(0,c)$ is bounded by
\begin{align*}
	\int_{0}^{c}me^{-mu}&\abs{(-\log{(1-e^{-u})})^{-n}-(-\log{u})^{-n}} \, du  \\
	&< A_{1} m\int_{0}^{c}u^{1/2}(-\log{u})^{-n-1/2} e^{-mu} \, du \\
	&< A_{2} m\int_{0}^{c}u^{(1-\epsilon)/2} e^{-mu} \, du \\
	&= O\left(m^{-(1/2-\epsilon/2)} \right),
\end{align*}
where the $A_{i}$ are constants. Therefore the asymptotic series is given by Theorem \ref{thm:logasymp}:
\[ \sum_{k \geqslant 1} (-1)^{k} \binom{m}{k} k^{n} \log{k} \sim \frac{(-1)^{n-1}(n-1)!}{(\log{m})^{n}} \sum_{r=0}^{\infty} (-1)^{r} \binom{-n}{r} \Gamma^{(r)}(1) (\log{m})^{-r}. \]

\subsection{Conjecture 2}
It is plain that this should work in a similar fashion: we begin with \eqref{eq:I'0sum}. Once again using the substitution $e^{-u}=1-e^{-y}$, the integral becomes
\[ m \int_{0}^{\infty} e^{-mu} \log{\left(-\log{(1-e^{-u})}\right)} \, du =: K(m). \]
The integrand is no longer always positive: it is positive for $u<\log{(1-e^{-1})}$ and negative subsequently. Splitting the integral at $c$ as before, we see that for large $u$,
\begin{align*}
	\abs{\log{\left(-\log{(1-e^{-u})}\right)} + u} &= \abs{\log{\left(-e^{u}\log{(1-e^{-u})}\right)}} \\
	&< \abs{\log{\left(-e^{u}(-e^{-u}+e^{-2u}/2)\right)}} \\
	&= \abs{\log{\left(1-\tfrac{1}{2}e^{-u}\right)}} < B(c),
\end{align*}
for some constant $B(c)$, so
\[ \int_{c}^{\infty} me^{-mu} \log{\left(-\log{(1-e^{-u})}\right)} \, du < \int_{c}^{\infty} m(u+B(c))e^{-mu} \, du = O(me^{-mc}). \]
Similarly, for $u<c$, the results of \ref{sec:ineq} give
\begin{align*}
	\abs{\log{\left(-\log{(1-e^{-u})}\right)} - \log{(-\log{u})}} = o(u^{1-\epsilon}),
\end{align*}
and so we have, for some constant \(A(c)\),
\begin{align*}
	\abs{K(m) - m\int_{0}^{c} e^{-mu} \log{\left(-\log{u}\right)} \, du} &< A(c)m\int_{0}^{c}  u^{1-\epsilon} e^{-mu} \, du + O(me^{-mc}) \\
	&= O\left( m^{-(1-\epsilon)} \right).
\end{align*}

Therefore the expansion is again given by a series in the logarithm: we have the similar result\footnote{\cite{wong2001asymptotic}, p. 71} that the integral
\begin{align}
	\label{eq:genloglogint}
	F(\lambda,z) = \int_{0}^{c} t^{\lambda-1} \log{(-\log{t})} e^{-z t} dt
\end{align}
has asymptotic expansion
\[ F(\lambda,z) \sim z^{-\lambda} \Gamma(\lambda) \log{\log{z}} - z^{-\lambda} \sum_{r=1}^{\infty} \frac{1}{r}\Gamma^{(r)}(\lambda) (\log{z})^{-r} \]
uniformly in $\arg{z}$, as $\abs{z} \to \infty$ in $\abs{\arg{(ze^{i\gamma})}} \leqslant \pi/2-\Delta<\pi/2$.

We therefore conclude:

\[ \sum_{k \geqslant 1} (-1)^{k} \binom{m}{k} \log{k} \sim \log{\log{m}} + \gamma - \sum_{r=1}^{\infty} \frac{1}{r}\Gamma^{(r)}(1) (\log{m})^{-r}, \]
a considerable improvement of Conjecture 2.

\subsection{Conjecture 3}
Lastly, we carry out the same procedure for $I(\alpha,m)$ where $\alpha \in \complexes \setminus \{ 0, -1, -2, \dotsc \}$. The same substitution as before gives
\[ I(\alpha,m) = m \int_{0}^{\infty} e^{-mu} \left(-\log{(1-e^{-u})}\right)^{\alpha} \, du. \]

We split the integral at $c$, $0<c<1$. The $u>c$ term is bounded by
\[ \frac{m}{n} \int_{c}^{\infty} e^{-(m-\Re{(\alpha)})z} dz = O\left( e^{-mc} \right) = o(m^{-1}) \]
as $m \to \infty$. As before, for $u<c$, the results of \ref{sec:ineq} show that
\[ \abs{\left(-\log{(1-e^{-u})}\right)^{\alpha} - \left(-\log{u}\right)^{\alpha}} = O\left( u^{1/2} (-\log{u})^{\Re{(\alpha)}-1/2} \right), \]
and because for any $1>\epsilon>0$, $u(-\log{u})^{\Re{(\alpha)}-1/2} = o(u^{1-\epsilon})$ as $u \to 0$, we have
\begin{align*}
	&\abs{I(\alpha,m)-m \int_{0}^{c} e^{-mu} \left(-\log{u}\right)^{\alpha} \, du } \\
	&\quad< \frac{m}{2}\int_{0}^{c} u^{1/2}(-\log{u})^{\Re{(\alpha)}-1/2} e^{-mu} du + o(m^{-1}) \\
	&\quad= o(m^{-(1/2-\epsilon)}),
\end{align*}
so the expansion is again given by a special case of \eqref{eq:genlogint}'s expansion:
\[ \sum_{k \geqslant 1} (-1)^{k} \binom{m}{k} k^{-\alpha} \sim -\frac{(\log{m})^{\alpha}}{\Gamma(\alpha+1)} \sum_{r=0}^{\infty} (-1)^{r} \binom{\alpha}{r} \Gamma^{(r)}(1) (\log{m})^{-r}, \]
which is a considerable improvement of Conjecture 3.

\section{Exact formulae for Conjecture 3 with $n,m \in \naturals$}
Euler\footnote{\cite{Euler:E726}, \S12; we have changed the notation to be consistent with our own.} gives for $m \in \naturals$ the following formula:
\[ \sum_{k \geqslant 1} (-1)^{k}\binom{m}{k} \frac{1}{k} = -\sum_{k=1}^{m} \frac{1}{k} = -H_{m}, \]
the $m$th harmonic number. He proves this by examining the series expansion of both sides of
\[ \frac{z^{c}}{(1-z)^{c+1}} \log{\left(1+\frac{z}{1-z}\right)} = \frac{z^{c}}{(1-z)^{c+1}} \log{(1-z)}, \]
and considering the coefficients of $z^{m}$ in the special case $c=0$.

It is well-known that
\begin{align}
\label{eq:harm1asymp}
	H_{m} = \log{m} + \gamma + O\left( \frac{1}{m} \right),
\end{align}
which proves the $n=1$ case of Conjecture 3. Is there a generalisation of this to larger values of $n$? Yes, although the expressions become more complicated.

For $m>0$, $s<0$, we set
\[ J_{m}(s) = m\int_{0}^{\infty} e^{-y}(1-e^{-y})^{m-1} e^{sy} \, dy; \]
we observe that
\[ J_{m}(s) = \sum_{k \geqslant 1} (-1)^{k-1} \binom{m}{k} \frac{k}{k-s} = \sum_{n = 0}^{\infty} s^{n} \sum_{k \geqslant 1} (-1)^{k-1} \binom{m}{k} \frac{1}{k^{n}} =: \sum_{n = 0}^{\infty} s^{n} S_{m,n}, \]
so $J_{m}(s)$ is in fact a generating function for the $S_{m,n}$. The substitution $x=e^{-y}$ transforms $J_{m}(s)$ into a beta integral:
\[ J_{m}(s) = m\int_{0}^{1} (1-x)^{m-1} x^{-s} \, dx = \frac{\Gamma(m+1)\Gamma(1-s)}{\Gamma(m+1-s)} \]
which is an analytic function of $s$ which does not vanish in a neighbourhood of the origin. Hence $\log{J_{m}(s)}$ exists and has a convergent power series.

To relate this to the generalised harmonic numbers,
\[ H^{(n)}_{m} = \sum_{k=1}^{m} \frac{1}{k^{n}}, \]
recall the functional equation for the $\Gamma$-function,
\[ \Gamma(z+1) = z \Gamma(z). \]
Logarithmically differentiating this equation gives
\[ \digamma(z+1) = \frac{1}{z} + \digamma(z), \]
where $\digamma$ is the Digamma-function $(\log{\Gamma})'$, and so
\[ H_{m} = \digamma(m+1) - \digamma(1). \]
Differentiating a further $n-1$ times gives
\[ \digamma^{(n-1)}(z+1) = \frac{(-1)^{n-1}(n-1)!}{z^{n}} + \digamma^{(n-1)}(z), \]
and it follows that
\[ H^{(n)}_{m} = \frac{(-1)^{n-1}}{(n-1)!}( \digamma^{(n-1)}(m+1) - \digamma^{(n-1)}(1) ). \]
Therefore,
\begin{align}
\label{eq:logBetaharm}
	\log{\Gamma(m+1)}+\log{\Gamma(1-s)}&-\log{\Gamma(m+1-s)} \nonumber \\
	&= \sum_{n=1}^{\infty} \frac{(-1)^{n-1}}{n!}( \digamma^{(n-1)}(m+1) - \digamma^{(n-1)}(1) ) s^{n} \nonumber \\
	&= \sum_{n=1}^{\infty} \frac{H^{(n)}_{m}}{n} s^{n}
\end{align}

Of course, now we have to express the derivatives of the functions we actually want in terms of these. For this we use the exponential formula:\footnote{\cite{Stanley:1999le}, p.5, Corollary 5.1.6} suppose
\[ f(x) = \sum_{n=1}^{\infty} a_{n} \frac{x^{n}}{n!}.  \]
Then
\[ e^{f(x)} = \sum_{n=0}^{\infty} b_{n} \frac{x^{n}}{n!}, \]
where
\[ b_{n}= \sum_{\{A_{1},\dotsc,A_{k}\} \in \Pi(n)} a_{\abs{A_{1}}} \dotsm a_{\abs{A_{k}}}, \]
where $\Pi(n)$ is the set of partitions of $\{ 1,\dotsc,n \}$, so, for example,
\[ \Pi(3) = \bigg\{ \big\{ \{1\},\{2\},\{3\} \big\} , \big\{ \{1,2\},\{3\} \big\} , \big\{ \{2,3\},\{1\} \big\} , \big\{ \{3,1\},\{2\} \big\} , \big\{ \{1,2,3\} \big\} \bigg\}, \]
and hence $b_{3}=a_{1}^{3}+3a_{1}a_{2}+a_{3}$.\footnote{These are also known as the (complete) Bell polynomials $Y_{n}(a_{1},\dotsc,a_{n})$.}

We now apply this to $ a_{n} = (n-1)! H^{(n)}_{m} $, as we found in (\ref{eq:logBetaharm}); we find
\begin{align*}
	J_{m}(s) = \exp{\left( \log{\Gamma(m+1)}+\log{\Gamma(1-s)}-\log{\Gamma(m+1-s)} \right)} &= \sum_{n=0} b_{n} \frac{s^{n}}{n!}
\end{align*}
with $b_{n}$ as discussed above.\footnote{The absolute convergence of this series is guaranteed by the convergence of the series expansions of $f$ and $\exp$, but we can also manage by manipulating them as formal power series in $s$.} Extracting the coefficient of $s^{n}$, we conclude:
\begin{theorem}
\label{thm:sum3exactform}
	Suppose $m$, $n$ are positive integers. Then
	\begin{align}
		\sum_{k \geqslant 1} (-1)^{k-1} \binom{m}{k} \frac{1}{k^{n}} = \frac{1}{n!} \sum_{\{A_{1},\dotsc,A_{k}\} \in \Pi(n)} a_{\abs{A_{1}}} \dotsm a_{\abs{A_{k}}},
	\end{align}
	where $a_{n} = (n-1)! H^{(n)}_{m}$.
\end{theorem}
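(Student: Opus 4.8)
The plan is to regard both sides of the identity as Taylor coefficients of the single generating function $J_m(s)$ and to equate them. Because $m$ is a positive integer, all the sums involved are finite: $\binom{m}{k}$ vanishes for $k>m$, so
\[ J_m(s) = \sum_{k\geq 1} (-1)^{k-1}\binom{m}{k}\frac{k}{k-s} \]
is a rational function of $s$ with poles only at $s=1,\dots,m$, hence analytic and nonvanishing on $\abs{s}<1$. The expansion $\frac{k}{k-s} = \sum_{n\geq 0}(s/k)^{n}$ is then valid for each $k$ in this common disc, and the computation preceding the statement already identifies the left-hand side of the theorem as the coefficient $S_{m,n}$ in $J_m(s) = \sum_{n\geq 0} s^{n} S_{m,n}$. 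It therefore suffices to compute these coefficients by the second route furnished by the beta-integral evaluation.

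First I would take the closed form $J_m(s) = \Gamma(m+1)\Gamma(1-s)/\Gamma(m+1-s)$ and pass to its logarithm, which is legitimate near the origin since $J_m$ is analytic and $J_m(0)=1$. Invoking \eqref{eq:logBetaharm} gives the convergent expansion
\[ \log J_m(s) = \sum_{n\geq 1} \frac{H_m^{(n)}}{n}\, s^{n}, \]
so $\log J_m = f$ with $a_n/n! = H_m^{(n)}/n$, i.e. $a_n = (n-1)!\,H_m^{(n)}$, matching the coefficients named in the theorem.

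Next I would apply the exponential formula to $J_m(s) = e^{f(s)}$, obtaining $J_m(s) = \sum_{n\geq 0} b_n\, s^{n}/n!$ with $b_n$ the $n$th complete Bell polynomial in the $a_j$, that is $b_n = \sum_{\{A_1,\dots,A_k\}\in\Pi(n)} a_{\abs{A_1}}\cdots a_{\abs{A_k}}$. Comparing with $J_m(s)=\sum_{n\geq 0} s^{n} S_{m,n}$ and matching the coefficient of $s^{n}$ yields $S_{m,n}=b_n/n!$, which is precisely the claimed formula.

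The algebra is routine, and since every series in the chain converges on the common disc $\abs{s}<1$ there is no genuine analytic obstacle. The only point worth stating carefully is the legitimacy of equating coefficients after composing the logarithmic and exponential series; I expect this to be the main (if mild) difficulty. It is resolved either analytically---$\log J_m$ has a genuine convergent Taylor series near $0$ and $\exp$ is entire, so $e^{f}$ is again analytic with uniquely determined coefficients---or formally, by reading the whole chain as an identity of formal power series in $s$, the exponential formula being a formal identity whose coefficient comparison is then automatic.
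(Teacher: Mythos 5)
Your proposal is correct and takes essentially the same route as the paper: the generating function $J_m(s)=\sum_{n\geqslant 0}S_{m,n}s^{n}$, the beta-integral closed form $\Gamma(m+1)\Gamma(1-s)/\Gamma(m+1-s)$, the logarithmic expansion \eqref{eq:logBetaharm}, and the exponential formula, followed by extraction of the coefficient of $s^{n}$. Your added observation that the sum is finite for integer $m$ (so that $J_m$ is rational and all series converge on $\abs{s}<1$) is a pleasant clarification of the convergence issue the paper relegates to a footnote, but it does not alter the argument.
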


However, there is more we can extract from this formula: most importantly from our point of view, because $\forall n > 1$, we have the asymptotic\footnote{This is easily shown using the Euler--Maclaurin formula, for example, or just the integral test if so desired.}
\[ H^{(n)}_{m} = \zeta(n) + O(m^{1-n}), \quad m \to \infty. \]
Therefore any and all divergent contributions to the series come from $H_{m}$, with the asymptotic we gave in \eqref{eq:harm1asymp}. We can use these facts to extract as many terms in the asymptotic series as we like; note that it is clear that all the divergent terms form a polynomial in $\log{m}$. Theorem \ref{thm:sum3exactform} therefore gives us the following elementary resolution of Conjecture 3:

\begin{corollary}
	Let $n$ be a positive integer. Then
	\[ \sum_{k \geqslant 1} (-1)^{k-1} \binom{m}{k} \frac{1}{k^{n}} = \frac{\log^{n}{m}}{n!} + \frac{\gamma\log^{n-1}{m}}{(n-1)!} + O(\log^{n-2}{m}) \]
\end{corollary}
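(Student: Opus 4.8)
The plan is to start from the exact formula of Theorem \ref{thm:sum3exactform} and decide which partitions in $\Pi(n)$ contribute at each order in $\log{m}$. The essential observation is a dichotomy in the growth rates of the weights $a_{j} = (j-1)!\,H^{(j)}_{m}$: by \eqref{eq:harm1asymp} the singleton weight $a_{1} = H_{m} = \log{m} + \gamma + O(1/m)$ grows logarithmically, whereas for every $j \geqslant 2$ the stated asymptotic $H^{(j)}_{m} = \zeta(j) + O(m^{1-j})$ gives $a_{j} = (j-1)!\,\zeta(j) + O(m^{1-j}) = O(1)$. Hence in each product $a_{\abs{A_{1}}} \dotsm a_{\abs{A_{k}}}$, only the singleton blocks contribute any growth, each supplying a single factor of order $\log{m}$.

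Next I would classify the partitions by their number of singleton blocks: a partition with $s$ singletons contributes a term of order $\log^{s}{m}$. The unique partition of $\{1,\dotsc,n\}$ into $n$ singletons gives the product $a_{1}^{n} = H_{m}^{n}$, of order $\log^{n}{m}$, and supplies the leading behaviour. Every other partition contains a block of size at least $2$; since having $n-1$ singletons would force the single remaining element into a singleton too (making all $n$ blocks singletons), any partition other than the trivial one has at most $n-2$ singletons, so its contribution is $O(\log^{n-2}{m})$. As $\Pi(n)$ is finite, summing these remaining contributions preserves the bound.

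It then remains only to expand the dominant term. By the binomial theorem,
\[ \frac{1}{n!}\,a_{1}^{n} = \frac{1}{n!}\left(\log{m} + \gamma + O(1/m)\right)^{n} = \frac{\log^{n}{m}}{n!} + \frac{\gamma \log^{n-1}{m}}{(n-1)!} + O\!\left(\log^{n-2}{m}\right), \]
and absorbing the contributions of all non-trivial partitions into the error term yields the stated result.

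The computation is essentially routine once the growth dichotomy is established; the only point requiring care is the combinatorial observation that no partition other than the all-singleton one can have as many as $n-1$ singleton blocks, which is precisely what produces a clean gap between the first two terms and the $O(\log^{n-2}{m})$ remainder. Should one wish to recover the full asymptotic series, the same method applies order by order: the coefficient of $\log^{n-j}{m}$ is assembled by summing over partitions with exactly $n-j$ singletons, weighting the larger blocks by their limiting values $(i-1)!\,\zeta(i)$, together with the lower-order contributions arising from the $\gamma$ and $O(1/m)$ terms in each $H_{m}$ factor.
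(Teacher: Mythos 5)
Your proof is correct and takes essentially the same approach as the paper: the paper likewise isolates the all-singleton partition as the sole source of the $(H_{m})^{n}$ contribution, observes that any other partition of $\{1,\dotsc,n\}$ contains at most $n-2$ singleton blocks, and expands $\tfrac{1}{n!}(H_{m})^{n} = \tfrac{1}{n!}\left(\log{m}+\gamma+O(1/m)\right)^{n}$ to extract the two leading terms. The only cosmetic difference is that you state the growth dichotomy $a_{j}=O(1)$ for $j \geqslant 2$ explicitly inside the proof, whereas the paper relies on its discussion of $H^{(n)}_{m} = \zeta(n) + O(m^{1-n})$ immediately preceding the corollary.
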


\begin{proof}
	To obtain the given terms, we need to consider the terms involving the highest powers of $H_{m}$; it is apparent that these are confined to the $(H_{m})^{n}$ term, because any partition of $\{ 1, \dotsc , n \}$ either contains $n$ sets of size $1$, or a set of size $2$ or larger, in which case it can only contain a maximum of $(n-2)$ sets of size $1$. Therefore the leading order terms are
	\[ \frac{1}{n!}(H_{m})^{n} = \frac{1}{n!}\left( \log{m}+\gamma +O(1/m) \right)^{n} = \frac{1}{n!}\left(\log^{n}{m} + n\gamma \log^{n-1}{m}\right) + O(\log^{n-2}{m}). \]
	The error term follows automatically from the above discussion.
\end{proof}

\begin{remark}
	It is in fact not necessary to restrict $m$ to positive integers for the above calculation, since the polygamma functions have identical asymptotics to the harmonic numbers.
\end{remark}

\section{Discussion}

Our integral $I(\alpha,m)$ is considerably easier to study asymptotically than the sum itself: this is true both of producing the asymptotic expansion and of numerical calculation; the heavy cancellation that occurs in the sums with positive powers in particular is especially hard to manage.

Due to the presence of the logarithms, we do not expect the positive powers to have an exact formula.

It is curious that positive powers cancel so well, whereas negative powers grow; the negative powers are explained by the generalised harmonic sums discussed above, but positive powers' behaviour lacks an easy justification. The $\alpha=0$ case is typically borderline.

\section{Conclusion}
We have proven the conjectures of Wu,\cite{ISI:000187232400003} and extended the results considerably to include the full asymptotic series, as well as a more general case for positive powers larger than $1$. We have also produced an exact formula for the negative-integer power case. While we do not think this has necessarily provided ``deep insights into mathematics'', we have no doubt that the reduced complexity of the expressions that result, and the implications for the structure of the theory, will be a relief and aid to physicists wishing to use and study LORE.

\appendix

%
%
%
%

\section{Inequalities}
\label{sec:ineq}

The most difficult parts of applying Theorem \ref{thm:logasymp} are bounding the extra contributions from the interval $(0,c)$ and the entire integral on $[c,\infty)$; in this section we shall prove some inequalities that aid this.

For large $u$, we require the following inequalities for the :

\begin{lemma}[Logarithm inequalities]
\label{thm:logineq}
	For $x<1$,
	\[ x \leqslant -\log{(1-x)} \leqslant \frac{x}{1-x}. \]
\end{lemma}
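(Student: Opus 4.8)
The plan is to reduce each of the two inequalities to a statement about the sign of an auxiliary function that vanishes at the origin, and to read off that sign from the function's derivative. The only genuine subtlety is that the stated domain $x<1$ includes negative $x$, where the direction of monotonicity reverses; since the origin will turn out to be a global minimum in each case, the inequalities nevertheless hold across the whole range $(-\infty,1)$.

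First I would dispatch the left-hand inequality by setting $f(x) = -\log{(1-x)} - x$, so that $f(0)=0$ and
\[ f'(x) = \frac{1}{1-x} - 1 = \frac{x}{1-x}. \]
On $(-\infty,1)$ the denominator $1-x$ is positive, so $f'$ has the same sign as $x$: the function $f$ is decreasing on $(-\infty,0)$ and increasing on $(0,1)$, whence $f$ attains its minimum value $f(0)=0$ at the origin. This gives $-\log{(1-x)} \geqslant x$ with equality precisely at $x=0$. I would then treat the right-hand inequality in exactly the same spirit, setting $g(x) = \frac{x}{1-x} + \log{(1-x)}$, so that $g(0)=0$ and a short computation gives
\[ g'(x) = \frac{1}{(1-x)^{2}} - \frac{1}{1-x} = \frac{x}{(1-x)^{2}}. \]
Once again $g'$ carries the sign of $x$ throughout $(-\infty,1)$, so $g$ has a global minimum of $0$ at the origin, and rearranging $g(x)\geqslant 0$ yields $\frac{x}{1-x} \geqslant -\log{(1-x)}$.

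An equivalent and perhaps more transparent route would be to write $-\log{(1-x)} = \int_{0}^{x} \frac{dt}{1-t}$ and bound the integrand: for $t$ between $0$ and $x$ one has $\frac{1}{1-t}$ lying between $1$ and $\frac{1}{1-x}$, and integrating these constant bounds reproduces both inequalities at once. Either way, the only point requiring a moment's care — and thus the main (if modest) obstacle — is the passage through negative $x$, where the monotonicity flips or, equivalently, the orientation of the integration interval reverses; in both formulations the symmetry of the argument about $x=0$ makes this harmless, and the inequalities persist unchanged for all $x<1$.
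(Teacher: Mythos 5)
Your proof is correct, and both of your routes (the auxiliary-function argument and the integral bound) are sound; but they differ in structure from the paper's own proof. The paper derives both inequalities from a \emph{single} convexity fact — that $e^{y}$ lies above its tangent at $0$, i.e.\ $y \leqslant e^{y}-1$ for all real $y$ — by substituting $y=-\log{(1-x)}$ to get the upper bound and $y=\log{(1-x)}$ to get the lower one. That trick needs no differentiation at all and handles all $x<1$ (negative $x$ included) automatically, since the tangent inequality holds for every real $y$; your observation that negative $x$ needs a moment's care is thus an artefact of your method rather than of the statement. What your approach buys in exchange is self-containment and transparency: the first-derivative analysis of $f(x)=-\log{(1-x)}-x$ and $g(x)=\frac{x}{1-x}+\log{(1-x)}$ makes the location and nature of the extremum explicit, and your integral formulation $-\log{(1-x)}=\int_{0}^{x}\frac{dt}{1-t}$, bounding the integrand between $1$ and $\frac{1}{1-x}$, is arguably the cleanest way to see both inequalities at once. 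One incidental benefit of your write-up: the paper's proof as printed states the conclusion of its second substitution as $\log{(1-x)} \geqslant -x$, where the substitution actually yields $\log{(1-x)} \leqslant -x$ (equivalent to the desired $x \leqslant -\log{(1-x)}$); your version has the directions right throughout.
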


\begin{proof}
	In some senses this is elementary: $e^{y}$ lies above its tangent at $0$, so
	\[ y \leqslant e^{y}-1; \]
	substituting $y=-\log{(1-x)}$ gives
	\[ -\log{(1-x)} \leqslant \frac{1}{1-x}-1 = \frac{x}{1-x}, \]
	and substituting $y=\log{(1-x)}$ gives
	\[ \log{(1-x)} \geqslant -x.  \]
\end{proof}

\begin{corollary}
	Suppose $u \geqslant c>0$ and $a > 0$. Then
	\[  e^{-au} \leqslant (-\log{(1-e^{-u})})^{a} \leqslant \frac{e^{-au}}{(1-e^{-c})^{a}}; \]
	the inequality is reversed if $a<0$.
\end{corollary}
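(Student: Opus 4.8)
The plan is to obtain this directly from the preceding Logarithm inequalities lemma by a single substitution followed by raising to a power. First I would set $x = e^{-u}$ in the double inequality $x \leqslant -\log(1-x) \leqslant x/(1-x)$. Since $u > 0$ gives $x = e^{-u} \in (0,1)$, the hypothesis $x < 1$ of the lemma is satisfied, and all three quantities involved are strictly positive. This immediately yields
\[ e^{-u} \leqslant -\log(1-e^{-u}) \leqslant \frac{e^{-u}}{1-e^{-u}}. \]

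Next I would use the hypothesis $u \geqslant c > 0$ to control the upper bound. Because $e^{-u}$ is decreasing, $u \geqslant c$ gives $e^{-u} \leqslant e^{-c}$, hence $1 - e^{-u} \geqslant 1 - e^{-c} > 0$, and therefore
\[ \frac{e^{-u}}{1-e^{-u}} \leqslant \frac{e^{-u}}{1-e^{-c}}. \]
Combining this with the previous chain produces
\[ e^{-u} \leqslant -\log(1-e^{-u}) \leqslant \frac{e^{-u}}{1-e^{-c}}, \]
which is the desired estimate at the level of first powers.

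Finally I would raise all three (positive) quantities to the power $a$. The only point requiring care is the direction of the inequalities: the map $t \mapsto t^{a}$ on $(0,\infty)$ is increasing when $a > 0$, so the chain is preserved and we obtain
\[ e^{-au} \leqslant \left(-\log(1-e^{-u})\right)^{a} \leqslant \frac{e^{-au}}{(1-e^{-c})^{a}}, \]
whereas for $a < 0$ the map is decreasing and every inequality reverses, giving the stated variant. There is no genuine obstacle here beyond tracking positivity and the monotonicity of the power function; the substance of the argument is entirely contained in the Logarithm inequalities lemma, and this corollary simply packages it in the exponential form used to bound the tail contributions on $[c,\infty)$ in Section~\ref{sec:ineq}.
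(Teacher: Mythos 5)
Your proposal is correct and follows essentially the same route as the paper: substitute $x=e^{-u}$ into the Logarithm inequalities lemma, bound $(1-e^{-u})^{-1}$ by $(1-e^{-c})^{-1}$ using $u\geqslant c$, and then raise the chain to the power $a$, tracking the monotonicity of $t \mapsto t^{a}$ to handle the sign of $a$. The only cosmetic difference is that the paper phrases the middle step as monotonicity of $(1-e^{-u})^{-1}$ in $u$, while you argue via $e^{-u}\leqslant e^{-c}$; these are the same observation.
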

\begin{proof}
	Set $x=e^{-u}$ in the previous lemma: this gives
	\[ e^{-u} \leqslant -\log{(1-e^{-u})} \leqslant \frac{e^{-u}}{1-e^{-u}}. \]
	The function $(1-e^{-u})^{-1}$ is decreasing for $u \geqslant 0$, so we find
	\[ e^{-u} \leqslant -\log{(1-e^{-u})} \leqslant \frac{e^{-u}}{1-e^{-c}} \]
	for $u>c$. Now, for $a>0$ ($<0$), $x^{a}$ is increasing (decreasing), so the sense of the inequalities is preserved (reversed) by raising them to the power $a$, which gives the result.
\end{proof}

The more difficult is the result required for small $u$: first we require some simple inequalities:

\begin{lemma}
\label{thm:taylorineq}
	Suppose $f \in C^{n+1}[a,b]$, and $f^{(n+1)}(x) \geqslant 0$ on $[a,b]$. Then for any $a \leqslant x \leqslant b$,
	\[ f(x) \geqslant \sum_{k=0}^{n} \frac{f^{(k)}(a)}{k!}. \]
\end{lemma}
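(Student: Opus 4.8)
The plan is to recognise the right-hand side as the degree-$n$ Taylor polynomial $P_n(x) = \sum_{k=0}^{n} \frac{f^{(k)}(a)}{k!}(x-a)^k$ of $f$ expanded about $a$, so that the assertion is simply that a $C^{n+1}$ function whose top derivative is nonnegative lies above its $n$th Taylor polynomial on $[a,b]$. The quickest route is Taylor's theorem with the Lagrange form of the remainder: since $f \in C^{n+1}[a,b]$, for each $x \in [a,b]$ there is a point $\xi$ between $a$ and $x$ (the case $x = a$ being trivial) with
\[ f(x) = P_n(x) + \frac{f^{(n+1)}(\xi)}{(n+1)!}(x-a)^{n+1}. \]
As $x \geqslant a$ we have $(x-a)^{n+1} \geqslant 0$, and the hypothesis $f^{(n+1)} \geqslant 0$ makes the remainder term nonnegative; hence $f(x) \geqslant P_n(x)$, which is the claim.

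Should one prefer an argument that does not quote the remainder formula, I would reduce to the following normalised statement and prove it by induction on $n$: if $g \in C^{n+1}[a,b]$ has $g^{(k)}(a) = 0$ for $0 \leqslant k \leqslant n$ and $g^{(n+1)} \geqslant 0$ on $[a,b]$, then $g \geqslant 0$ on $[a,b]$; applying this to $g = f - P_n$ recovers the lemma. The base case $n = 0$ is immediate, since $g(a) = 0$ and $g' \geqslant 0$ give $g(x) = \int_a^x g'(t)\,dt \geqslant 0$. For the inductive step, $g'$ satisfies the same hypotheses with $n$ replaced by $n-1$, because $(g')^{(k)}(a) = g^{(k+1)}(a) = 0$ for $0 \leqslant k \leqslant n-1$ and $(g')^{(n)} = g^{(n+1)} \geqslant 0$; the induction hypothesis then yields $g' \geqslant 0$, and integrating from $a$ with $g(a) = 0$ gives $g \geqslant 0$.

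This result is genuinely elementary, so I do not expect any real obstacle. The single point needing care is the sign of the remainder, which is why the conclusion is one-sided: it relies on $(x-a)^{n+1} \geqslant 0$, valid precisely because the lemma restricts to $x \geqslant a$. For $x < a$ the parity of $n+1$ would intervene and the inequality could reverse, but the stated range $a \leqslant x \leqslant b$ avoids this issue completely.
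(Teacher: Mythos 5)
Your primary argument is exactly the paper's proof: invoke Taylor's theorem with the Lagrange form of the remainder $\frac{f^{(n+1)}(\xi)}{(n+1)!}(x-a)^{n+1}$ and observe that it is nonnegative because $f^{(n+1)} \geqslant 0$ and $x \geqslant a$ (you also correctly read the right-hand side as the Taylor polynomial with the $(x-a)^{k}$ factors, which the paper's statement omits as a typo). Your supplementary induction argument is a correct, self-contained alternative, but the essential route coincides with the paper's.
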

\begin{proof}
	Use Taylor's theorem with, for example, the Lagrange form of the remainder,
	\[ \frac{f^{(k+1)}(\xi)}{(k+1)!}(x-a)^{k+1}, \quad a \leqslant \xi \leqslant x  \]
	and note that this is positive.
\end{proof}

(The reverse case of $f^{(n+1)}(x) \leqslant 0$ is a trivial application of the lemma to $-f$.)

\begin{lemma}[Binomial inequalities]
\label{thm:binomineq}
	Let $n \in \reals$. Then for every $x \in (0,1)$,
	\[ x\min{\{n,2^{n}-1\}} < (1+x)^{n}-1 < x\max{\{n,2^{n}-1\}}. \]
\end{lemma}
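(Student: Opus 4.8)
The plan is to read both inequalities as statements that the function $g(x) = (1+x)^{n}-1$ on $[0,1]$ lies between two particular straight lines, and to deduce this from the convexity of $g$. First I would record the three quantities that matter: $g(0)=0$, $g(1)=2^{n}-1$, and $g'(0)=n$. Hence the tangent to the graph at the origin is the line $y = nx$, and the chord joining the endpoints $(0,0)$ and $(1,2^{n}-1)$ is the line $y = (2^{n}-1)x$. The claimed bounds are then exactly the assertion that on $(0,1)$ the graph of $g$ lies strictly between these two lines, with the $\min$ and $\max$ merely sorting out which line is the upper and which the lower.

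Next I would settle the convexity by computing $g''(x) = n(n-1)(1+x)^{n-2}$. Since $(1+x)^{n-2}>0$ on $(0,1)$, the function $g$ is strictly convex exactly when $n(n-1)>0$, i.e.\ $n<0$ or $n>1$, and strictly concave exactly when $0<n<1$. For a strictly convex function the tangent at $0$ lies strictly below the graph while the chord lies strictly above it on the open interval; for a strictly concave function these two roles reverse. The tangent bound is precisely the one-term ($n=1$) instance of Lemma~\ref{thm:taylorineq} applied to $g$ (or to $-g$ in the concave case), and the chord bound is the standard strict-convexity inequality. This yields, on $(0,1)$: when $n<0$ or $n>1$ we get $nx < g(x) < (2^{n}-1)x$, whereas when $0<n<1$ the bounds reverse to $(2^{n}-1)x < g(x) < nx$.

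It then remains to identify $\min\{n,2^{n}-1\}$ and $\max\{n,2^{n}-1\}$ in each regime. To do this I would introduce the auxiliary function $h(n) = 2^{n}-1-n$, which satisfies $h(0)=h(1)=0$ and $h''(n) = 2^{n}(\log 2)^{2} > 0$, so $h$ is strictly convex; a strictly convex function with zeros at $0$ and $1$ is negative on $(0,1)$ and positive on $(-\infty,0)\cup(1,\infty)$. Hence $2^{n}-1 > n$ exactly when $n<0$ or $n>1$, and $2^{n}-1<n$ exactly when $0<n<1$. Substituting this back, in the convex regime we have $\min = n$ and $\max = 2^{n}-1$, matching the displayed bounds, and in the concave regime $\min = 2^{n}-1$ and $\max = n$, which again matches.

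The hard part is really just the bookkeeping behind this final matching: the convexity of $g$ changes sense at precisely the two exponents $n=0$ and $n=1$ where the ordering of $n$ and $2^{n}-1$ also changes, and these two sign-flips are synchronised so that the tangent line always supplies the bound on the $\min$ side and the chord the bound on the $\max$ side (or vice versa), the two lines never crossing for $x\in(0,1)$. The only genuine caveat is the boundary $n\in\{0,1\}$, where $g$ is affine ($g(x)=0$ and $g(x)=x$ respectively) and $n=2^{n}-1$, so all three quantities coincide and the inequalities degenerate to equalities; the stated strict inequalities hold for $n\notin\{0,1\}$ by the strictness of the convexity and concavity estimates on the open interval, and I would flag this exclusion explicitly (or read the statement as permitting equality at those two exceptional exponents).
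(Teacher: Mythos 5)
Your proof is correct and is essentially the paper's own argument: both rest on the sign of $f''(x) = n(n-1)(1+x)^{n-2}$, bounding the graph on one side by its tangent $y = nx$ at the origin (Lemma~\ref{thm:taylorineq}) and on the other by the secant $y = (2^{n}-1)x$ over $[0,1]$, then sorting out $\min$ and $\max$ according to the sign of $2^{n}-1-n$. Your two refinements --- proving $2^{n}-1 \gtrless n$ rigorously via convexity of $h(n)=2^{n}-1-n$ where the paper merely says ``it is easy to check'', and flagging that at $n \in \{0,1\}$ all three quantities coincide so the stated strict inequalities degenerate to equalities --- are both correct and tighten a point the paper glosses over.
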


\begin{proof}
	We shall prove the upper bounds; the lower bounds are proved in exactly the same way. There are two cases to check. Define $f(x)=(1+x)^{n}-1$, and consider $f''(x)$:
	\[ f''(x) = n(n-1)(1+x)^{n-2}; \]
	the last bracket is plainly positive, so $f''(x)$ is uniformly nonpositive for $n \in [0,1]$, and nonnegative otherwise. Suppose first that $f''(x)$ is nonpositive. Then the reverse of the previous lemma implies
	\[ f(x) \leqslant nx. \]
	Now suppose that $f''(x)>0$. Then $f$ is convex, so the graph lies below the $(0,1)$ secant and
	\[ f(x) \leqslant (2^{n}-1)x; \]
	it is easy to check that $2^{n}-1>n$ except in $[0,1]$, and the theorem follows.
\end{proof}

\begin{remark}
	This is a special case of the following: suppose $f:[a,b] \to \reals$ has $f'(x)$ nondecreasing for every $x \in [a,b]$. Then
	\[ f'(a) \leqslant \frac{f(x)-f(a)}{x-a} \leqslant \frac{f(b)-f(a)}{b-a},  \]
	i.e. the graph lies between the secant and the tangent. For $f'(x)$ nonincreasing, the inequality is reversed.
	
	The proof is easy: using the mean value theorem and that $f'$ is nondecreasing gives the left inequality, and the right follows from the definition of convexity.
\end{remark}

Finally, the rest is done with
\begin{lemma}
\label{thm:absbound}
	Let $\alpha \in \complexes$, and write $\alpha = a + ib$, $a,b \in \reals$. Then for $0 \leqslant X < C < 1$,
	\[ \abs{1-(1+X)^{\alpha}} \leqslant A X^{1/2}, \]
	$A>0$ dependent only on $\alpha$ and $C$.
\end{lemma}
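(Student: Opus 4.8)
The plan is to separate the modulus and the phase of the complex power and bound each elementarily, exploiting that the stated exponent $\tfrac12$ is in fact weaker than what the calculation naturally delivers. Since $1+X>0$, the principal power is $(1+X)^{\alpha}=e^{\alpha\log(1+X)}$ with $\log(1+X)$ real, so writing $\alpha=a+ib$ I would first record the factorisation into a positive modulus and a unimodular phase,
\[ (1+X)^{\alpha} = (1+X)^{a}\,e^{\,i b \log(1+X)}, \]
and then split via the triangle inequality,
\[ \abs{1-(1+X)^{\alpha}} \leqslant \abs{1-(1+X)^{a}} + (1+X)^{a}\,\abs{1-e^{\,ib\log(1+X)}}. \]
This reduces the claim to three independent estimates, each uniform on $0\leqslant X<C<1$.

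First I would bound the real-exponent term $\abs{1-(1+X)^{a}}$ directly by Lemma \ref{thm:binomineq}: checking the sign cases shows that its two-sided sandwich yields $\abs{(1+X)^{a}-1}\leqslant M_{1}X$ with $M_{1}=\max\{\abs{a},\abs{2^{a}-1}\}$, a constant depending only on $a$ (the endpoint $X=0$ being trivial). Next, for the unimodular factor I would use $\abs{1-e^{i\theta}}=2\abs{\sin(\theta/2)}\leqslant\abs{\theta}$ together with $\log(1+X)\leqslant X$, which follows from Lemma \ref{thm:logineq}; this gives $\abs{1-e^{ib\log(1+X)}}\leqslant\abs{b}X$. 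Finally the prefactor is controlled by $(1+X)^{a}\leqslant M_{2}:=\max\{1,(1+C)^{a}\}$ for $X<C$, separating the cases $a\geqslant 0$ and $a<0$.

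Combining the three estimates yields $\abs{1-(1+X)^{\alpha}}\leqslant (M_{1}+\abs{b}M_{2})X$, a bound of order $X$. Since $0\leqslant X<1$ forces $X\leqslant X^{1/2}$, this immediately gives the claimed inequality with $A=M_{1}+\abs{b}M_{2}$, which depends only on $a$, $b$ and $C$, i.e. only on $\alpha$ and $C$.

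The point worth stressing is that there is no genuine obstacle here: the natural bound is $O(X)$, strictly stronger than the advertised $O(X^{1/2})$, and the square root is retained only because it is the shape needed when the lemma is fed back into the Conjecture 3 estimate, where $X$ plays the role of $\eta/\ell$ with $\eta\sim u/2$ and $\ell=-\log u$. The only mild care required is uniformity of the constant $M_{1}$ across the regimes $a\in[0,1]$ and $a\notin[0,1]$, but Lemma \ref{thm:binomineq} packages precisely this dichotomy, so nothing further is needed.
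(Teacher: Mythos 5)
Your proof is correct, but it takes a genuinely different route from the paper. The paper's own proof squares the modulus and expands
\[ \abs{1-(1+X)^{\alpha}}^{2} = 1 - 2(1+X)^{a}\cos{\left(b\log{(1+X)}\right)} + (1+X)^{2a}, \]
then bounds this real expression using the binomial inequalities together with $-\cos{x} \leqslant \tfrac{1}{2}x^{2}-1$ and $0 \leqslant \log{(1+X)} \leqslant X$, arriving at an $O(X)$ bound \emph{for the square}; the exponent $1/2$ in the statement is thus intrinsic to that method, since taking square roots is the last step. You instead factor $(1+X)^{\alpha} = (1+X)^{a}e^{ib\log{(1+X)}}$ into a positive real power and a unimodular phase and apply the triangle inequality, replacing the cosine estimate by $\abs{1-e^{i\theta}} \leqslant \abs{\theta}$; the ingredients (Lemma \ref{thm:binomineq} for the real power, $\log{(1+X)} \leqslant X$ from Lemma \ref{thm:logineq}) are shared with the paper, but your decomposition avoids the lossy squaring and delivers the strictly stronger bound $O(X)$, with $X^{1/2}$ recovered only via the trivial $X \leqslant X^{1/2}$ on $[0,1)$. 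Your sign-case analysis for $M_{1}=\max\{\abs{a},\abs{2^{a}-1}\}$ and the prefactor bound $M_{2}=\max\{1,(1+C)^{a}\}$ are both sound, and the constant depends only on $\alpha$ and $C$ as required. Your closing observation is also apt: fed into Proposition \ref{thm:powerbound}, your $O(X)$ version would yield $O\left(u(-\log{u})^{\Re{(\alpha)}-1}\right)$ in place of $O\left(u^{1/2}(-\log{u})^{\Re{(\alpha)}-1/2}\right)$, which is precisely the strengthening the paper remarks is ``probably'' possible but unnecessary for its purposes.
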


\begin{proof}
	Obviously the sensible thing to do is square the left-hand side and work from there. Recalling that $z^{a} = e^{a\log{z}}$, and using $\abs{x+iy}^{2} = x^{2}+y^{2} $, we find
	\[ \abs{1-(1+X)^{\alpha}}^{2} = 1 -2(1+X)^{a}\cos{b \log{(1+X)}} + (1+X)^{2a}. \]
	The bounds from the previous lemma now come into play, along with the well-known inequalities
	\[ -\cos{x} \leqslant \frac{1}{2}x^{2}-1, \qquad 0 \leqslant \log{(1+X)} \leqslant X \]
	to give
	\begin{align*}
		\abs{1-(1+X)^{\alpha}}^{2} &\leqslant 1+2(1+m X)(\tfrac{1}{2}b^{2}X^{2}-1) + (1+M X) \\
		&= (M-2m)X + X(b^{2} X+b^{2} m X^{2}),
	\end{align*}
	where $m = \min\{ a,2^{a}-1 \}$ and $M = \max\{ 2a, 2^{2a}-1 \}$. Clearly the second term is positive for sufficiently small $X$; the first term can also be shown to be positive by case-by-case consideration, and the result follows.
\end{proof}

We can now lump all these lemmata together to prove:

\begin{proposition}
\label{thm:powerbound}
	Suppose $u<c<1$. Then for any $\alpha \in \complexes$,
	\[ \abs{(-\log{(1-e^{u})})^{\alpha}-(-\log{u})^{\alpha}} \leqslant B(c) u^{1/2}(-\log{u})^{\Re(\alpha)-1/2} \]
	for some $B(c)$.
\end{proposition}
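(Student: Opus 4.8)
The plan is to reduce everything to Lemma \ref{thm:absbound} by extracting the dominant factor $(-\log u)^{\alpha}$. Writing $1-e^{-u} = u\,g(u)$ with $g(u) = (1-e^{-u})/u$, one has the clean identity
\[ -\log(1-e^{-u}) = -\log u - \log g(u) = L + \delta, \]
where $L := -\log u$ and $\delta := -\log g(u)$. Since $1-e^{-u} < u$ on $(0,1)$ we have $g(u) < 1$, so $\delta > 0$; thus the two logarithms differ only by a small positive perturbation of the large quantity $L$. Factoring out $L^{\alpha}$ gives
\[ (L+\delta)^{\alpha} - L^{\alpha} = -L^{\alpha}\bigl(1 - (1+X)^{\alpha}\bigr), \qquad X := \delta/L, \]
and since $L$ is real and positive, $\abs{L^{\alpha}} = L^{\Re(\alpha)} = (-\log u)^{\Re(\alpha)}$.

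The first substantive step is to control the perturbation $\delta$, showing $\delta = O(u)$ as $u \to 0$, uniformly for $u<c$. Because $e^{-u} < 1 - u + \tfrac12 u^{2}$ on $(0,1)$ (the tail of the alternating exponential series has the sign of its first term) we get $g(u) > 1 - \tfrac12 u$, whence $1 - g(u) < \tfrac12 u$ and $g(u) > 1 - \tfrac12 c > \tfrac12$. The upper bound of Lemma \ref{thm:logineq}, applied to $x = 1 - g(u)$, then yields
\[ \delta = -\log\bigl(1-(1-g(u))\bigr) \leq \frac{1-g(u)}{g(u)} \leq K(c)\,u \]
for a constant $K(c)$ depending only on $c$. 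In particular $\delta$ is small while $L = -\log u$ is large, so $X = \delta/L \to 0$ as $u \to 0$.

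The second step is the application of Lemma \ref{thm:absbound}. Once $u$ is small enough that $X < C < 1$ (say $X < \tfrac12$, which holds for all $u < u_{1}$ for some $u_{1}>0$), the lemma gives $\abs{1 - (1+X)^{\alpha}} \leq A X^{1/2}$ with $A$ depending only on $\alpha$ and $C$. Combining this with the factorisation,
\[ \abs{(L+\delta)^{\alpha} - L^{\alpha}} \leq L^{\Re(\alpha)}\, A\, X^{1/2} = A\, \delta^{1/2}\, L^{\Re(\alpha)-1/2} \leq A\,K(c)^{1/2}\, u^{1/2}(-\log u)^{\Re(\alpha)-1/2}, \]
which is exactly the claimed bound, with $B(c) = A\,K(c)^{1/2}$, on the range $u < u_{1}$.

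Finally, I would dispose of the remaining range $u_{1} \leq u < c$ by a compactness argument: there both sides are continuous, the left-hand side is bounded above, and the right-hand side is bounded below by a positive constant (since $-\log u \geq -\log c > 0$ throughout), so the inequality persists after enlarging $B(c)$. The main obstacle is the bookkeeping that guarantees the hypothesis $X < C < 1$ of Lemma \ref{thm:absbound} actually holds on the relevant interval; this is precisely where the estimate $\delta = O(u)$ pitted against $L \to \infty$ does the real work, and it is the reason the exponent $\tfrac12$ and the logarithmic factor $(-\log u)^{\Re(\alpha)-1/2}$ emerge together on the right.
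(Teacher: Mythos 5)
Your proposal is correct and follows essentially the same route as the paper: factor out $(-\log u)^{\alpha}$, show the perturbation $\delta = -\log(1-e^{-u})-(-\log u)$ is $O(u)$, and apply Lemma \ref{thm:absbound} to $X=\delta/L$ to obtain the exponent $1/2$ and the factor $(-\log u)^{\Re(\alpha)-1/2}$. The only (welcome) refinement is your explicit compactness argument on $[u_{1},c)$ guaranteeing the hypothesis $X<C<1$ of Lemma \ref{thm:absbound}, where the paper instead simply takes $c$ sufficiently small.
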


\begin{proof}
	Write $\alpha = a + ib$, $a,b \in \reals$. First notice that both $-\log{(1-e^{u})}$ and $-\log{u}$ are positive real numbers; the significance of this is that for positive reals $X$ and $Y$,
	\[ \abs{X^{\alpha}} = X^{a}  \]
	and
	\[ X^{a}Y^{a} = (XY)^{a}. \]
	Therefore we rewrite the left-hand side of the inequality as
	\[ \abs{(-\log{(1-e^{u})})^{\alpha}-(-\log{u})^{\alpha}} = (-\log{u})^{a} \abs{1 - \left( \frac{-\log{(1-e^{u})}}{-\log{u}} \right)^{a} }. \]
	We also have
	\[ -\log{(1-e^{-u})} \leqslant -\log{(u-u^{2}/2)} \leqslant -\log{u} + \tfrac{1}{2}u, \]
	using 
	\[ 1-e^{-u} \geqslant u-\tfrac{1}{2}u^{2}, \]
	so for sufficiently small $c$,
	\[ \frac{-\log{(1-e^{u})}}{-\log{u}} - 1 \leqslant C \]
	and we can apply the previous lemma to obtain
	\begin{align*}
		(-\log{u})^{a} &\abs{1 - \left( \frac{-\log{(1-e^{u})}}{-\log{u}} \right)^{a} } \\
		&\leqslant A(\alpha,C(c)) (-\log{u})^{a} \left( \frac{-\log{(1-e^{-u})}-(-\log{u})}{-\log{u}} \right)^{1/2} \\
		&\leqslant A(a,C(c)) (-\log{u})^{a} \left( \frac{u}{-2\log{u} } \right)^{1/2} \\
		&\leqslant A(a,C(c)) B u^{(1-\varepsilon)/2} (-\log{u})^{a} ;
	\end{align*}
	the final inequality comes from $-\log{u}< B u^{-\varepsilon}$ on $(0,c]$ for some $B>0$ and a small enough $c$.
\end{proof}

(The result can probably be strengthened to $O(u^{1-\varepsilon})$, but we do not need this.)

On the other hand, the integral for Conjecture 2 requires the much simpler

\begin{lemma}
For $0<u<c<1$ and some $D>0$,
	\[ \abs{\log{(-\log{(1-e^{-u})})} - \log{(-\log{u})} } \leqslant D u^{1-\varepsilon}. \]
\end{lemma}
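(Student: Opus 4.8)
The plan is to reduce everything to controlling the ratio of the two inner logarithms. Write $P = -\log{(1-e^{-u})}$ and $Q = -\log{u}$; for $0<u<c<1$ both are positive real numbers, so the outer logarithms are well defined, and I would first show that $P$ and $Q$ are close, with $P-Q = O(u)$ uniformly on $(0,c]$.

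The first step is a two-sided sandwich of $1-e^{-u}$. The tangent bound $e^{-u}\geqslant 1-u$ (as in Lemma \ref{thm:logineq}) gives $1-e^{-u}\leqslant u$, while $e^{-u}\leqslant 1-u+\tfrac{1}{2}u^{2}$ gives $1-e^{-u}\geqslant u(1-\tfrac{1}{2}u)$, the latter being the bound already used in Proposition \ref{thm:powerbound}. Since $-\log$ is decreasing, the first inequality yields $Q\leqslant P$, and the second yields
\[ P-Q = \log{\frac{u}{1-e^{-u}}} \leqslant \log{\frac{1}{1-\tfrac{1}{2}u}} = -\log{\left(1-\tfrac{1}{2}u\right)} \leqslant \frac{u/2}{1-u/2}, \]
where the final step is Lemma \ref{thm:logineq} applied with $x=u/2$. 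Hence $0\leqslant P-Q\leqslant u/(2-c)$ on $(0,c]$.

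Next I would pass from $P-Q$ to $\log{P}-\log{Q}$ by the mean value theorem: $\log{P}-\log{Q} = (P-Q)/\xi$ for some $\xi$ between $Q$ and $P$. Because $\xi\geqslant Q = -\log{u}>0$ and $P-Q\geqslant 0$, this gives
\[ \abs{\log{P}-\log{Q}} \leqslant \frac{P-Q}{-\log{u}} \leqslant \frac{u}{(2-c)(-\log{u})}. \]
Finally, on $(0,c]$ we have $-\log{u}\geqslant -\log{c}>0$ and $u\leqslant u^{1-\varepsilon}$ (since $u^{\varepsilon}\leqslant 1$), so the right-hand side is at most $D\,u^{1-\varepsilon}$ with $D = 1/\bigl((2-c)(-\log{c})\bigr)$, which is the claimed bound.

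The main obstacle is really just the first step: obtaining a clean two-sided control of $P-Q=\log{(u/(1-e^{-u}))}$ that is uniform on $(0,c]$ and of size $O(u)$. Once that sandwich is in place, the passage to the outer logarithm via the mean value theorem and the crude estimate $u\leqslant u^{1-\varepsilon}$ are routine. I note in passing that this argument in fact delivers the stronger bound $O\bigl(u/\log(1/u)\bigr)$, but the stated $O(u^{1-\varepsilon})$ is all that is needed for the treatment of Conjecture 2.
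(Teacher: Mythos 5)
Your proof is correct and follows essentially the same route as the paper's: both reduce $\abs{\log{P}-\log{Q}}$ to the relative difference $(P-Q)/Q$ (your mean value theorem step is equivalent to the paper's use of $\log{x}\leqslant x-1$), then bound $P-Q=O(u)$ via the Taylor sandwich $u-\tfrac{1}{2}u^{2}\leqslant 1-e^{-u}\leqslant u$, and finally absorb the $1/(-\log{u})$ factor to reach $Du^{1-\varepsilon}$. Your write-up is in fact somewhat more explicit than the paper's, which leaves the ordering $P\geqslant Q$ and the intermediate bound on $P-Q$ implicit; the sharper $O\bigl(u/\log(1/u)\bigr)$ you observe is likewise visible in the paper's penultimate line.
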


\begin{proof}
	\begin{align*}
		\abs{\log{(-\log{(1-e^{-u})})} - \log{(-\log{u})} } &= \abs{\log{\left( \frac{-\log{(1-e^{-u})} }{-\log{u} } \right) } } \\
		&\leqslant \frac{-\log{(1-e^{-u})} - (-\log{u}) }{-\log{u} } - 1 \\
		\intertext{using $\log{x}\leqslant x-1$,}
		&\leqslant \frac{u}{2(-\log{u})} \\
		&\leqslant D u^{1-\varepsilon},
	\end{align*}
	using that $\log{u} = o(u^{-\varepsilon})$ as $u \to 0$.
\end{proof}

\section*{Acknowledgements}
I would like to thank Stephen Siklos for suggesting improvements to my approach that considerably improved the estimate, as well as Piers Bursill-Hall, Imre Leader and David Stuart for their help and support.

\end{document}